\def\cF{\mathcal{F}}
\def\cE{\mathcal{E}}
\def\cS{\mathcal{S}}
\def\N{{\mathbb N}}
\def\NP{\textrm{NP}}
\def\coNP{\textrm{coNP}}
\def\poly{\textrm{poly}}
\def\col{\textrm{col}}
\def\CPMlong{\textsc{Multicolored Perfect Matching}\xspace}
\def\CPM{\textsc{MPM}\xspace}
\def\TDM{$3$-\textsc{Dimensional Perfect Matching}\xspace}
\def\dSC{\textsc{$d$-Set Cover}\xspace}
\def\tESC{\textsc{$3$-Exact Set Cover}\xspace}
\def\IS{\textsc{Independent Set}\xspace}
\def\IMlong{\textsc{Induced Matching}\xspace}
\def\IM{\textsc{IM}\xspace}
\def\RBDSlong{\textsc{Red Blue Dominating Set}\xspace}
\def\RBDS{\textsc{RBDS}\xspace}
\def\DSlong{\textsc{Dominating Set}\xspace}
\def\DS{\textsc{DS}\xspace}
\def\IDSlong{\textsc{Independent Dominating Set}\xspace}
\def\IDS{\textsc{IDS}\xspace}
\def\ConDS{\textsc{Connected Dominating Set}\xspace}
\def\MC{\textsc{Multicolored Clique}\xspace}
\def\ConVClong{\textsc{Connected Vertex Cover}\xspace}
\def\ConVC{\textsc{ConVC}\xspace}
\def\CapVClong{\textsc{Capacitated Vertex Cover}\xspace}
\def\CapVC{\textsc{CapVC}\xspace}
\def\ConFVS{\textsc{Connected Feedback Vertex Set}\xspace}
\spnewtheorem{reduction-rule}[rule]{Rule}{\bfseries}{\itshape}
\begin{document}

\title{Tight Kernel Bounds for Problems on\\ Graphs with Small Degeneracy\thanks{Partially supported by ERC Starting Grant NEWNET 279352.}}
\author{Marek Cygan\inst{1} \and Fabrizio Grandoni\inst{1} \and Danny Hermelin\inst{2}}

\institute{%
Dalle Molle Institute (IDSIA)\\
Galleria 1, 6928, Manno - Switzerland\\
\email{\{marek,fabrizio\}@idsia.ch}%
\and%
Max Plank Institute for Informatics, \\
Saarbr\"{u}cken, Saarland 66111 - Germany \\
\email{hermelin@mpi-inf.mpg.de}%
}

\date{}

\maketitle


\begin{abstract}

Kernelization is a strong and widely-applied technique in
parameterized complexity. In a nutshell, a kernelization
algorithm for a parameterized problem transforms a given
instance of the problem into an equivalent instance whose size
depends solely on the parameter. Recent years have seen major
advances in the study of both upper and lower bound techniques
for kernelization, and by now this area has become one of the
major research threads in parameterized complexity.

\hspace{10pt} In this paper we consider kernelization for
problems on $d$-degenerate graphs, i.e. graphs such that any
subgraph contains a vertex of degree at most $d$. This graph
class generalizes many classes of graphs for which effective
kernelization is known to exist, e.g. planar graphs, $H$-minor
free graphs, and $H$-topological-minor free graphs. We show
that for several natural problems on $d$-degenerate graphs the
best known kernelization upper bounds are essentially tight. In
particular, using intricate constructions of weak compositions,
we prove that unless \coNP $\subseteq$ \NP/\poly:
\begin{itemize}
\item \DSlong has no kernels of size
    $O(k^{(d-1)(d-3)-\varepsilon})$ for any $\varepsilon >
    0$. The current best upper bound is $O(k^{(d+1)^2})$.
\item \IDSlong has no kernels of size $O(k^{d-4-\varepsilon})$
    for any $\varepsilon > 0$. The current best upper bound
    is $O(k^{d+1})$.
\item \IMlong has no kernels of size $O(k^{d-3-\varepsilon})$
    for any $\varepsilon > 0$. The current best upper bound
    is $O(k^d)$.
\end{itemize}
To the best of our knowledge, the result on \DSlong is the
first example of a lower bound with a super-linear dependence
on $d$ in the exponent.

\hspace{10pt} In the last section  of the paper, we also give simple
kernels for \ConVClong and \CapVClong of size $O(k^d)$ and
$O(k^{d+1})$ respectively. We show that the latter problem has
no kernels of size $O(k^{d-\varepsilon})$ unless \coNP
$\subseteq$ \NP/\poly\ by a simple reduction from \dSC (a
similar lower bound for \ConVClong is already known).
\end{abstract}


\section{Introduction}
\label{Section: Introduction}

Parameterized complexity is a two-dimensional refinement of
classical complexity theory introduced by Downey and
Fellows~\cite{DowneyFellows1999} where one takes into account
not only the total input length $n$, but also other aspects of
the problem quantified in a numerical parameter $k \in \N$. The
main goal of the field is to determine which problems have
algorithms whose exponential running time is confined strictly
to the parameter. In this way, algorithms running in $f(k)
\cdot n^{O(1)}$ time for some computable function $f()$ are
considered feasible, and parameterized problems that admit
feasible algorithms are said to be \emph{fixed-parameter
tractable}. This notion has proven extremely useful in
identifying tractable instances for generally hard problems,
and in explaining why some theoretically hard problems are
solved routinely in practice.

A closely related notion to fixed-parameter tractability is
that of kernelization. A \emph{kernelization algorithm} (or
\emph{kernel}) for a parameterized problem $L \subseteq
\{0,1\}^* \times \N$ is a polynomial time algorithm that
transforms a given instance $(x,k)$ to an instance $(x',k')$
such that: $(i)$ $(x,k) \in L \iff (x',k') \in L$, and $(ii)$
$|x'| + k' \leq f(k)$ for some computable function $f$. In
other words, a kernelization algorithm is a polynomial-time
reduction from a problem to itself that shrinks the problem
instance to an instance with size depending only on the
parameter. Appropriately, the function $f$ above is called the
\emph{size} of the kernel.

Kernelization is a notion that was developed in parameterized
complexity, but it is also useful in other areas of computer
science such as cryptography~\cite{HarnikNaor2010} and
approximation algorithms~\cite{NemhauserTrotter1975}. In
parameterized complexity, not only is it one of the most
successful techniques for showing positive results, it also
provides an equivalent way of defining fixed-parameter
tractability: A decidable parameterized problem is solvable in
$f(k) \cdot n^{O(1)}$ time iff it has a
kernel~\cite{CaiChenDowneyFellows1997}. From practical point of
view, compression algorithms often lead to efficient
preprocessing rules which can significantly simplify real life
instances~\cite{Fellows2006,GuoNiedermeier2007b}. For these
reasons, the study of kernelization is one of the leading
research frontiers in parameterized complexity. This research
endeavor has been fueled by recent tools for showing lower
bounds on kernel
sizes~\cite{Bodlaender-et-al2009a,Bodlaender-et-al2011,Bodlaender-et-al2009b,ChenFlumMuller2011,DellMarx2012,DellMelkebeek2010,DomLokshtanovSaurabh2009,HermelinWu2012,Kratsch2012}
which rely on the standard complexity-theoretic assumption of
\coNP $\nsubseteq$ \NP/\poly.

Since a parameterized problem is fixed-parameter tractable iff
it is kernelizable, it is natural to ask which fixed-parameter
problems admit kernels of reasonably small size. In recent
years there has been significant advances in this area. One
particulary prominent line of research in this context is the
development of \emph{meta-kernelization} algorithms for
problems on sparse graphs. Such algorithms typically provide
compressions of either linear or quadratic size to a wide range
of problems at once, by identifying certain generic problem
properties that allow for good compressions. The first work in
this line of research is due to Guo and
Niedermeier~\cite{GuoNiedermeier2007a}, which extended the
ideas used in the classical linear kernel for \DS\ in planar
graphs~\cite{AlberFellowsNiedermeier2004} to linear kernels for
several other planar graph problems. This result was subsumed
by the seminal paper of Bodlaender \emph{et
al.}~\cite{Bodlaender-et-al2009c}, which provided
meta-kernelization algorithms for problems on graphs of bounded
genus, a generalization of planar graphs. Later Fomin \emph{et
al.}~\cite{Fomin-et-al2010} provided a meta-kernel for problems
on $H$-minor free graphs which include all bounded genus
graphs. Finally, a recent manuscript by Langer \emph{et
al.}~\cite{Langer-et-al2012} provides a meta-kernelization
algorithm for problems on $H$-topological-minor free graphs.
All meta-kernelizations above have either linear or quadratic
size.

How far can these meta-kernelization results be extended? A
natural class of sparse graphs which generalizes all graph
classes handled by the meta-kernelizations discussed above is
the class of $d$-degenerate graphs. A graph is called
\emph{$d$-degenerate} if each of its subgraphs has a vertex of
degree at most $d$. This is equivalent to requiring that the
vertices of the graph can be linearly ordered such that each
vertex has at most $d$ neighbors to its right in this ordering.
For example, any planar graph is 5-degenerate, and for any
$H$-minor (resp. $H$-topological-minor) free graph class there
exists a constant $d(H)$ such that all graphs in this class are
$d(H)$-degenerate (see \emph{e.g.}~\cite{Diestel2005}). Note
that the \IS\ problem has a trivial linear kernel in
$d$-degenerate graphs, which gives some hope that a
meta-kernelization result yielding small degree polynomial
kernels might be attainable for this graph class.

Arguably the most important kernelization result in
$d$-degenerate graphs is due to Philip \emph{et
al.}~\cite{Philip-et-al2012} who showed a $O(k^{(d+1)^2})$ size
kernel for \DSlong, and an $O(k^{d+1})$ size kernel for
\IDSlong. Erman \emph{et al.}~\cite{Erman-et-al2010} and Kanj
\emph{et al.}~\cite{Kanj-et-al2011} independently gave a
$O(k^d)$ kernel for the \IMlong\ problem, while Cygan \emph{et
al.}~\cite{Cygan-et-al2012} showed a $O(k^{d+1})$ kernel is for
\ConVClong. While all these results give polynomial kernels,
the exponent of the polynomial depends on $d$, leaving open the
question of kernels of polynomial size with a fixed constant
degree. This question was answered negatively for \ConVC\
in~\cite{Cygan-et-al2012} using the standard reduction from
\dSC. It is also shown in~\cite{Cygan-et-al2012} that other
problems such \ConDS\ and \ConFVS\ do not admit a kernel of any
polynomial size unless \coNP $\subseteq$ \NP/\poly.
Furthermore, the results in~\cite{DellMarx2012,HermelinWu2012}
can be easily used to show exclude a
$O(k^{d-\varepsilon})$-size kernel for \DSlong, for some small
positive constant $\varepsilon$.\\

\noindent \textbf{Our results:} In this paper, we show that all
remaining kernelization upper bounds for $d$-degenerate graphs
mentioned above have matching lower bounds up to some small
additive constant. Perhaps the most surprising result we obtain
is the exclusion of $O(k^{(d-3)(d-1)-\varepsilon})$ size
kernels for \DSlong\ for any $\varepsilon > 0$, under the
assumption of coNP $\nsubseteq$ \NP/\poly. This result is
obtained by an intricate application of \emph{weak
compositions} which were introduced
by~\cite{DellMelkebeek2010}, and further applied
in~\cite{DellMarx2012,HermelinWu2012}. What makes this result
surprising is that it implies that \IDSlong\ is fundamentally
easier than \DSlong\ in $d$-degenerate graphs. We also show a
$O(k^{d-4-\varepsilon})$ lower bound for \IDSlong, and an
$O(k^{d-3-\varepsilon})$ lower bound for \IMlong. The latter
result is also somewhat surprising when one considers the
trivial linear kernel for the closely related \IS\ problem.
Finally, we slightly improve the $O(k^{d+1})$ kernel for
\ConVClong\ of~\cite{Cygan-et-al2012} to $O(k^d)$, and show
that the related \CapVClong\ problem has a kernel of size
$O(k^{d+1})$, but no kernel of size $O(k^{d-\varepsilon})$
unless coNP $\subseteq$ \NP/\poly. Table~\ref{Table} summarizes
the currently known state of the art of kernel sizes for the
problems considered in this paper.

\begin{table}
\begin{center}
\begin{tabular}{|c|c|c|}
\hline
& \quad Lower Bound \quad & \quad Upper Bound \quad \\ \hline
\DSlong & $(d-3)(d-1)-\varepsilon$ & $(d+1)^2$~\cite{Philip-et-al2012} \\ \hline
\quad \IDSlong \quad & $d-4-\varepsilon$ & $d+1$~\cite{Philip-et-al2012} \\ \hline
\IMlong & $d-3-\varepsilon$ & $d$~\cite{Erman-et-al2010,Kanj-et-al2011} \\ \hline
\ConVClong & $d-1-\varepsilon$~\cite{Cygan-et-al2012} & $d$ \\ \hline
\CapVClong & $d-\varepsilon$ & $d+1$ \\ \hline
\end{tabular}
\end{center}
\caption{Lower and upper bounds for kernel sizes for problems
in $d$-degenerate graphs. Only the exponent of the polynomial
in $k$ is given. Results without a citation are obtained in
this paper.}
\label{Table}%
\end{table}


\section{Kernelization Lower Bounds}
\label{Section: Kernelization Lower Bounds}

In the following section we quickly review the main tool that
we will be using for showing our kernelization lower bounds,
namely compositions. A composition algorithm is typically a
transformation from a classical NP-hard problem $L_1$ to a
parameterized problem $L_2$. It takes as input a sequence of
$T$ instances of $L_1$, each of size $n$, and outputs in
polynomial time an instance of $L_2$ such that $(i)$ the output
is a YES-instance iff one of the inputs is a YES-instance, and
$(ii)$ the parameter of the output is polynomially bounded by
$n$ and has only ``small" dependency on $T$. Thus, a
composition may intuitively be thought of as an ``OR-gate" with
a guarantee bound on the parameter of the output. More
formally, for an integer $d \geq 1$, a weak $d$-composition is
defined as follows:

\begin{definition}[weak $d$-composition]
\label{Definition: d-Composition}%
Let $d \geq 1$ be an integer constant, let $L_1 \subseteq
\{0,1\}^*$ be a classical (non-parameterized) problem, and let
$L_2 \subseteq \{0,1\}^* \times \N$ be a parameterized problem.
A \emph{weak $d$-composition} from $L_1$ to $L_2$ is a
polynomial time algorithm that on input $x_1, \dots, x_{t^d}
\in \{0,1\}^n$ outputs an instance $(y, k') \in \{0,1\}^*
\times \N$ such that:
\begin{itemize}
\item $(y, k') \in L_2 \Longleftrightarrow x_i \in
    L_1$ for some $i$, and
\item $k' \leq t \cdot n^{O(1)}$.
\end{itemize}
\end{definition}

The connection between compositions and kernelization lower
bounds was discovered by~\cite{Bodlaender-et-al2009a} using
ideas from~\cite{HarnikNaor2010} and a complexity theoretic
lemma of~\cite{FortnowSanthanam2008}. The following particular
connection was first observed in~\cite{DellMelkebeek2010}.

\begin{lemma}[\cite{DellMelkebeek2010}]
\label{Lemma: d-composition gives lower bounds}%
Let $d \geq 1$ be an integer, let $L_1 \subseteq \{0,1\}^*$ be
a classical \textnormal{NP}-hard problem, and let $L_2
\subseteq \{0,1\}^* \times \N$ be a parameterized problem. A
weak-$d$-composition from $L_1$ to $L_2$ implies that $L_2$ has
no kernel of size $O(k^{d-\varepsilon})$ for any $\varepsilon
> 0$, unless \textnormal{coNP} $\subseteq$
\textnormal{NP/poly}.
\end{lemma}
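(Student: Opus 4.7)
The plan is to argue by contradiction in the style of Fortnow--Santhanam and Dell--van Melkebeek. Suppose $L_2$ has a kernel of size $O(k^{d-\varepsilon})$. Given an arbitrary sequence $x_1,\dots,x_{t^d}$ of instances of $L_1$, each of size $n$, first apply the weak $d$-composition to obtain an instance $(y,k') \in L_2$ with $k' \leq t \cdot n^{O(1)}$ which is a YES-instance iff some $x_i$ is a YES-instance of $L_1$. Then feed $(y,k')$ through the hypothetical kernelization to obtain an equivalent instance $(y',k'')$ of $L_2$ whose total encoding length is bounded by $O((k')^{d-\varepsilon}) = O(t^{d-\varepsilon} \cdot n^{c})$ for some constant $c = c(\varepsilon,d)$. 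The composition of these two polynomial-time algorithms is therefore a polynomial-time procedure that OR-compresses $t^d$ instances of $L_1$ of size $n$ into a single instance of $L_2$ of size $O(t^{d-\varepsilon} \cdot n^{c})$.

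The next step is to tune $t$ so that this OR-compression is strong enough to invoke the complexity-theoretic machinery. I would set $t := \lceil n^{(c+2)/\varepsilon} \rceil$, which is polynomial in $n$ so that the overall procedure runs in polynomial time in the total input length $n \cdot t^d$. With this choice, the output length $t^{d-\varepsilon} \cdot n^{c}$ is at most $t^{d}/n^{2}$, so that the per-instance compression ratio tends to zero as $n \to \infty$. This is precisely the notion of a \emph{weak OR-distillation} (or cross-composition) of $L_1$ into $L_2$ used in~\cite{DellMelkebeek2010}.

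Finally, I invoke the Dell--van Melkebeek theorem that an OR-distillation of an \NP-hard language into any language in \NP\ of the above compression ratio implies \coNP $\subseteq$ \NP/\poly. Since $L_1$ is \NP-hard and $L_2 \in \NP$ (the kernelization algorithm maps to an \NP\ instance without loss of generality, and the composed instance is in \NP\ by assumption on $L_2$), the derived OR-distillation yields exactly the hypothesis of that theorem, giving the contradiction.

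The main obstacle, and the only truly delicate point, is calibrating $t$ as a polynomial in $n$: one must simultaneously ensure that (a) the composition plus kernel runs in time polynomial in the total input length $n\cdot t^d$, and (b) the resulting output is short enough to meet the hypothesis of the distillation theorem. Once $t = n^{\Theta(1/\varepsilon)}$ is fixed, the rest of the argument is a black-box application of~\cite{FortnowSanthanam2008} as sharpened in~\cite{DellMelkebeek2010}, and no additional problem-specific reasoning is needed.
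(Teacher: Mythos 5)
The paper does not actually prove this lemma---it is imported verbatim from~\cite{DellMelkebeek2010} as a black box---so there is no internal proof to compare against. Your sketch is the standard derivation and is correct: compose, kernelize, calibrate $t$ polynomially in $n$ so that the output length $t^{d-\varepsilon}n^{c}$ drops below $t^{d}$ (your choice $t=\lceil n^{(c+2)/\varepsilon}\rceil$ does this with room to spare), and invoke the Fortnow--Santhanam/Dell--van Melkebeek OR-compression theorem for the \NP-hard language $L_1$. One small remark: your parenthetical insistence that $L_2\in\NP$ is unnecessary---the complementary-witness/oracle-communication argument only counts the bits of the compressed output and places no constraint on the target language, which is precisely why Remark~\ref{Remark: Compression} lets the paper apply the lemma to compressions into arbitrary sets (as it must, e.g., in the reduction from \RBDS\ to \DS). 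Stating it as a hypothesis would actually weaken the lemma below what the paper uses.
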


\begin{remark}
\label{Remark: Compression}%
Lemma~\ref{Lemma: d-composition gives lower bounds} also holds
for \emph{compressions}, a stronger notion of kernelization, in
which the reduction is not necessarily from the problem to
itself, but rather from the problem to some arbitrary~set.
\end{remark}


\section{Dominating Set}
\label{Section: Dominating Set}

\def\iteri{{\delta}}
\def\iterj{{\lambda}}
\def\iterc{{\gamma}}

We begin by considering the \DSlong (\DS) problem. In this
problem, we are given an undirected graph $G=(V,E)$ together
with an integer $k$, and we are asked whether there exists a
set $S$ of at most $k$ vertices such that each vertex of $G$
either belongs to $S$ or has a neighbor in $S$ (i.e. $N[S]=V$).
The main result of this section is stated in
Theorem~\ref{thm:ds} below.

\begin{theorem}
\label{thm:ds} Let $d \geq 4$. The \DSlong problem in
$d$-degenerate graphs has no kernel of size
$O(k^{(d-1)(d-3)-\varepsilon})$ for any constant
$\varepsilon>0$ unless \textnormal{NP} $\subseteq$
\textnormal{coNP/poly}.
\end{theorem}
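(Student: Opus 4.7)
My plan is to invoke Lemma~\ref{Lemma: d-composition gives lower bounds} by designing a weak $(d-1)(d-3)$-composition from a suitable NP-hard source problem $L_1$, such as \MC or \MPM (both are defined in the preamble and are classically NP-hard), to \DSlong on $d$-degenerate graphs. Given $t^{(d-1)(d-3)}$ instances of $L_1$, each of size $n$, I index them by tuples $\vec{\alpha}=(\alpha_{i,j})_{(i,j)\in[d-1]\times[d-3]}$ with $\alpha_{i,j}\in[t]$, and construct in polynomial time a graph $G$ together with a budget $k'=t\cdot n^{O(1)}$ such that $G$ has a dominating set of size at most $k'$ iff at least one input instance $I_{\vec{\alpha}}$ is a YES-instance. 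Lemma~\ref{Lemma: d-composition gives lower bounds} then yields the stated lower bound.

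\textbf{Construction sketch.} The graph $G$ has two interacting parts. The \emph{selector} introduces, for each coordinate $(i,j)\in[d-1]\times[d-3]$, a group $A_{i,j}$ of $t$ candidate vertices together with auxiliary vertices (pendants and forced cliques/independent sets) designed so that every dominating set within the budget must pick exactly one vertex $a_{i,j}^{*}\in A_{i,j}$; collectively these picks encode a guessed tuple $\vec{\alpha}^{*}\in[t]^{(d-1)(d-3)}$. The \emph{verifier} introduces, for every instance $I_{\vec{\alpha}}$, a gadget of size $\poly(n)$ faithfully encoding $I_{\vec{\alpha}}$, wired to the selector so that the chosen vertices $a_{i,j}^{*}$ jointly dominate this gadget \emph{only} when $\vec{\alpha}^{*}=\vec{\alpha}$; in that case the remaining undominated vertices of the gadget can be covered within the residual budget iff $I_{\vec{\alpha}}$ admits a solution. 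The parameter is set so that it pays for $(d-1)(d-3)$ selector vertices plus a $\poly(n)$-sized witness for the selected instance, giving $k'=t\cdot n^{O(1)}$ (the $t$ factor actually being absorbed into $\poly(n)$ for fixed $d$).

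\textbf{Degeneracy and the main obstacle.} The core difficulty, and what dictates the peculiar exponent $(d-1)(d-3)$, is arranging adjacencies so that $G$ is $d$-degenerate. I will order the vertices with verification gadgets first and selector groups $A_{i,j}$ last, so that every verification vertex may have up to $d$ later neighbors among the selectors. The factorization $(d-1)\cdot(d-3)$ is exploited by splitting the coordinates into two axes: a typical verification vertex is attached to at most $d-1$ selectors from one axis and at most one from the other (or a symmetric variant), spending $d$ forward edges, while internal verifier edges contribute only backward edges and the selector gadgets themselves are locally sparse. The delicate core of the argument is simultaneously (i) enforcing a unique tuple $\vec{\alpha}^{*}$, (ii) engineering the verifier-to-selector wiring so that dominating the correct $I_{\vec{\alpha}}$-gadget is equivalent to solving $I_{\vec{\alpha}}$ in the leftover budget, and (iii) keeping the out-degree of every verifier vertex within the tight envelope of $d$ forward neighbors; the $-3$ (rather than $-1$) in the second factor will ultimately account for the extra local adjacencies required to encode \MC-style (or \MPM-style) constraints within the $d$-degenerate budget. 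Once this balance is achieved, correctness of the reduction and the parameter bound $k'\le t\cdot\poly(n)$ will follow by routine verification.
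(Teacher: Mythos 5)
Your high-level plan (weak $(d-1)(d-3)$-composition plus Lemma~\ref{Lemma: d-composition gives lower bounds}) is the right frame, and \MPM is indeed the source problem the paper uses (routed through \RBDSlong rather than \DSlong directly, which costs one unit of degeneracy via Lemma~\ref{lem:rbds}). But the proposal stops exactly where the proof begins: everything you label ``the delicate core'' is the entire content of the theorem, and the one concrete structural commitment you do make is the one that cannot work. You propose that each verifier vertex attaches \emph{directly} to the selector groups, spending its $d$ forward edges across the two axes. A vertex with only $d$ direct edges into selector groups can witness at most $d$ of the $(d-1)(d-3)$ coordinates of the guessed tuple, so it cannot distinguish its own instance index from the $t^{(d-1)(d-3)}$ possibilities; no wiring of $d$ direct edges recovers a product of $(d-1)(d-3)$ independent choices. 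The exponent in the paper does not come from partitioning coordinates into two axes of direct attachments; it comes from a \emph{two-level} encoding. Writing $d'$ for the paper's internal parameter (so the final graph is $(d'+3)$-degenerate), each instance vertex $r_{e,i}$ has exactly $d'$ edges into an intermediate layer $B_{code}$ of blue vertices indexed by $(d'+2)$-digit numbers in base $d't$, and each such blue vertex has exactly $d'+2$ edges into the selector layer $R_{code}$ (one per digit). The index space $t^{d'(d'+2)}$ arises as the \emph{product of the two degrees} across this indirection, which is precisely why the exponent is a product of two numbers each close to $d$.

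This indirection creates a second problem your sketch does not anticipate: after the selector choice is made, each color class $B^{\ell}_{code}$ has $(d')^{d'+2}$ undominated vertices, while the instance vertices can each dominate only $d'$ of them. The paper must introduce a separate \emph{fillin gadget} ($R_{fill}, B_{fill}$) whose sole purpose is to absorb $(d')^{d'+2}-d'$ of these per color without raising the degeneracy, together with a careful budget $k=k'+n/2$ that forces exactly one instance-vertex per color and (via Lemma~\ref{lem:enforcement-forall}) forces all chosen instance vertices to share a single index $i$. None of this is ``routine verification''; without the intermediate code layer, the fillin gadget, and the exact counting in Lemmas~\ref{lem:enforcement-exists}--\ref{lem:ds-mapping}, the composition neither achieves the claimed order nor enforces the OR semantics. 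As it stands the proposal is a statement of intent rather than a proof.
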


In order to prove Theorem~\ref{thm:ds}, we show a lower bound
for a similar problem called the \RBDSlong problem (\RBDS):
Given a bipartite graph $G=(R \cup B,E)$ and an integer $k$,
where $R$ is the set of \emph{red} vertices and $B$ is the set
of \emph{blue} vertices, determine whether there exists a set
$D \subseteq R$ of at most $k$ red vertices which dominate all
the blue vertices (i.e. $N(D)=B$). According to
Remark~\ref{Remark: Compression}, the lemma below shows that
focusing on \RBDS is sufficient for proving
Theorem~\ref{thm:ds} above.

\begin{lemma}
\label{lem:rbds} There is a polynomial time algorithm, which
given a $d$-degenerate instance $I=(G=(R \cup B,E),k)$ of \RBDS
creates a $(d+1)$-degenerate instance $I'=(G',k')$ of \DS, such
that $k'=k+1$ and $I$ is YES-instance iff $I'$ is a
YES-instance.
\end{lemma}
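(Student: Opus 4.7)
The plan is to construct $G'$ by attaching a very small gadget to $G$. Specifically, introduce two fresh vertices $u$ and $u'$, make $u$ adjacent to every vertex of $R$, and let $u'$ be a pendant attached only to $u$; set $k':=k+1$. Before doing this, apply the trivial preprocessing step: if some $b \in B$ has no neighbor in $R$, then $I$ is a NO-instance of \RBDS and we output a fixed trivial NO-instance of \DS; thus we may assume every blue vertex has at least one red neighbor.

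For correctness, the forward direction is immediate: given $D \subseteq R$ with $|D|\le k$ and $N(D)=B$, the set $D \cup \{u\}$ has size at most $k+1$, and $u$ dominates $R \cup \{u,u'\}$ while $D$ dominates $B$. For the backward direction, let $D'$ be a dominating set of $G'$ with $|D'| \le k+1$. Since $u'$ has $u$ as its unique neighbor, we may assume $u \in D'$: if $u'\in D'$ and $u\notin D'$, swap $u'$ for $u$, which can only enlarge the set of dominated vertices. Hence $D'' := D' \setminus \{u\}$ has size at most $k$, and since $u$ already covers $R \cup \{u,u'\}$, the set $D''$ must dominate $B$ in $G$. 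To convert $D''$ into a subset of $R$, replace every $b \in D'' \cap B$ by any neighbor of $b$ in $R$, which exists by preprocessing; because $G$ is bipartite, such a $b$ contributed to dominating only itself among $B$, and the new red vertex still dominates $b$. The resulting set is a red dominating set of size at most $k$, certifying that $I$ is a YES-instance.

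It remains to verify the degeneracy bound, which is the only place where one must be slightly careful about where to insert the new vertices into the elimination order. Fix a linear ordering $v_1,\ldots,v_n$ of $V(G)$ witnessing its $d$-degeneracy, i.e.\ in which every $v_i$ has at most $d$ neighbors among $v_{i+1},\ldots,v_n$, and extend it to $v_1,\ldots,v_n,u,u'$. Each $v_i \in R$ now has at most $d+1$ later neighbors (its original ones plus $u$), each $v_i \in B$ has at most $d$ later neighbors, $u$ has exactly one later neighbor ($u'$), and $u'$ has none; thus $G'$ is $(d+1)$-degenerate. The main obstacle was the backward direction, since a priori a minimum dominating set of $G'$ could use blue vertices, but this is defused by the bipartiteness of $G$ together with the preprocessing guaranteeing a red neighbor for every blue vertex.
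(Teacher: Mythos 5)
Your proposal is correct and follows essentially the same construction as the paper: the paper adds two vertices $r,r'$ with $r$ adjacent to $R\cup\{r'\}$, exactly your $u,u'$ gadget. You merely spell out the "w.l.o.g.\ the solution contains $r$ and no vertex of $B$" step (via the pendant-vertex swap and replacing blue vertices by red neighbors, with the preprocessing for isolated blue vertices), which the paper leaves implicit.
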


\begin{proof}
As the graph $G'$ we initially take $G=(R \cup B,E)$ and then
we add two vertices $r$, $r'$ and make $r$ adjacent to all the
vertices in $R \cup \{r'\}$. Clearly $G'$ is
$(d+1)$-degenerate. Note that if $S \subseteq R$ is a solution
in $I$, then $S \cup \{r\}$ is a dominating set in $I'$. In the
reverse direction, observe that w.l.o.g. we may assume that a
solution $S'$ for $I'$ contains $r$ and moreover contains no
vertex of $B$. Therefore $I$ is a YES-instance iff $I'$ is a
YES-instance. \qed
\end{proof}

We next describe a weak $d(d+2)$-composition from \CPMlong to
\RBDS in $(d+2)$-degenerate graphs. The \CPMlong problem (\CPM)
is as follows: Given an undirected graph $G=(V,E)$ with even
number $n$ of vertices, together with a color function $\col :
E \rightarrow \{0,\ldots,n/2-1\}$, determine whether there
exists a perfect matching in $G$ with all the edges having
distinct colors. A simple reduction from \TDM, which is
NP-complete due to Karp~\cite{karp-21}, where we encode one
coordinate using colors, shows that \CPM is NP-complete
when we consider multigraphs. In the following lemma
we show that \CPM is NP-complete even for simple graphs.

\begin{lemma}
The \CPM problem is NP-complete.
\end{lemma}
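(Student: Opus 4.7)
The plan is to reduce from \TDM directly, refining the multigraph reduction sketched above so that the resulting graph is simple. Given a 3DM instance with element sets $X,Y,Z$ of common cardinality $q$ and triple set $T=\{t_1,\dots,t_m\}\subseteq X\times Y\times Z$, I would build the \CPM instance as follows. For every $x\in X$ and every $y\in Y$ introduce one vertex $v_x$ resp.\ $v_y$; for every triple $t=(x,y,z)\in T$ introduce four fresh private vertices $a_t,b_t,c_t,d_t$ and add the five edges $v_xa_t$, $a_tb_t$, $b_tc_t$, $c_td_t$, $d_tv_y$ (a simple path of length five). The graph is simple because the four gadget vertices are private to their triple and the neighbours of $v_x$ (resp.\ $v_y$) are the distinct endpoints $a_t$ (resp.\ $d_t$) of the triples through $x$ (resp.\ $y$).

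Next I would color the edges so as to encode the cover-exactly-once condition. For each triple $t=(x,y,z)$ introduce two \emph{private} colors $p_t,q_t$, paint both $v_xa_t$ and $a_tb_t$ with $p_t$, paint both $c_td_t$ and $d_tv_y$ with $q_t$, and paint the middle edge $b_tc_t$ with the \emph{shared} color $c_z$ depending only on the $Z$-coordinate. This produces $2q+4m$ vertices and exactly $q+2m=(2q+4m)/2$ distinct colors, matching the range $\{0,\dots,n/2-1\}$ required by the definition of \CPM.

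For correctness, consider any multicolored perfect matching $M$. Because each of $a_t,b_t,c_t,d_t$ has only the two neighbours listed above and all of them must be matched, exactly one of two local patterns occurs inside every gadget: either $\{v_xa_t,\,b_tc_t,\,d_tv_y\}\subseteq M$ (call $t$ \emph{selected}) or $\{a_tb_t,\,c_td_t\}\subseteq M$. In both cases the private colors $p_t,q_t$ are each consumed exactly once, whereas the shared color $c_z$ appears in $M$ precisely when $t$ is selected. Since $M$ must use every color exactly once, each $c_z$ is contributed by exactly one selected triple, so every $z\in Z$ is covered exactly once; analogously the requirement that $v_x$ and $v_y$ be matched forces each $x\in X$ and each $y\in Y$ to lie in exactly one selected triple. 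Hence the selected triples form a 3DM solution. Conversely, any 3DM solution yields such a matching by taking the selected pattern on the solution triples and the other pattern on the rest; by construction this matching is perfect and multicolored.

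The only real subtlety is calibrating the gadget together with the coloring so that (a) the private colors are consumed regardless of whether the triple is selected, and (b) the shared color $c_z$ is consumed precisely when the triple is selected, all without introducing parallel edges — the five-edge path together with the two doubled-up private colors is engineered to make both goals compatible. Membership of \CPM in \NP is witnessed by the matching itself, so combining the reduction with the NP-hardness of \TDM yields the lemma.
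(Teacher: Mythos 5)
Your proof is correct, and it reaches the same destination as the paper by a genuinely different (more direct) route. The paper splits the argument in two: it first asserts in one line that \TDM reduces to \CPM on \emph{multigraphs} by encoding the $Z$-coordinate with colors, and the displayed proof is then entirely devoted to removing parallel edges — each (multi-)edge $e=uv$ is replaced by a three-edge path $x_u\,y_{u,e}\,y_{v,e}\,x_v$, a fresh per-edge color $\pi(e)$ is placed on two of the three path edges so that it is consumed exactly once under either of the two possible local matching patterns, and the original color of $e$ survives on the remaining edge exactly when $e$ is ``selected.'' You collapse both steps into a single reduction from \TDM, using a five-edge path per triple with two doubled-up private colors $p_t,q_t$ playing the role of $\pi(e)$ and the shared color $c_z$ carried by the middle edge. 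The central trick is identical in both arguments: assign a private color to two gadget edges of which any perfect matching contains exactly one, so that private colors are always saturated while the shared color is emitted precisely by selected gadgets; your two-pattern case analysis for degree-two internal vertices mirrors the paper's case analysis for $y_{u,e},y_{v,e}$, and your color count $q+2m=n'/2$ checks out. What your version buys is self-containedness — the paper's multigraph step is only sketched, whereas you verify the full equivalence with \TDM explicitly; what the paper's version buys is a reusable ``multigraph-to-simple'' reduction for \CPM that is independent of the source problem. No gaps.
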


\begin{proof}
We show a polynomial time reduction from \CPM in multigraphs,
where several parallel edges are allowed.
Let $(G=(V,E),\col : V \to \{0, \ldots, |V|/2-1\})$ 
be an instance of \CPM where $G$ is a multigraph, $|V|$ is even
and $\pi : E \to \{0, \ldots, |E|-1\}$ is an arbitrary
bijection between the multiset of edges and integers from $0$ to $|E|-1$.
Moreover we assume that there is some fixed linear order on $V$.
We create a graph $G'$ as follows.
The set of vertices of $G'$ is $V' = \{x_v : v \in V\} \cup \{y_{v,e} : e \in E, v \in V\}$, that is we create a vertex in $G'$ for each vertex of $G$
and for each endpoint of an edge of $G$.
The set of edges of $G'$ is $E' = \{y_{u,e}y_{v,e} : e=uv \in E\} \cup \{x_v y_{v,e} : v \in V, e \in E, v \in e\}$.
Finally, we define the color function $\col': E' \to \{0, \ldots, n'/2-1\}$,
where $n' = 2|E| + |V|$.
For $y_{u,e}y_{v,e} \in E'$ we set $\col'(y_{u,e}y_{v,e}) = \pi(e)$,
for $x_v y_{v,e} \in E'$, where $e=uv$ and $u<v$ we set $\col'(x_v y_{v,e}) = \pi(e)$,
while for $x_v y_{v,e} \in E'$, where $e=uv$ and $u>v$ we set $\col'(x_v y_{v,e}) = |E|+\col(e)$.

Clearly the construction can be performed in polynomial time,
and the graph $G'$ is simple, hence it suffices to show that the instance
$(G',\col')$ is a YES-instance iff $(G,\col)$ is a YES-instance.
Let $M \subseteq E$ be a solution for $(G,\col)$.
Observe that the set $M' = \{y_{u,e}y_{v,e} : e=uv \in M\} \cup \{x_{v,e}y_{v,e} : e \in E \setminus M, v \in e\}$ is a solution for $(G',\col')$.

In the other direction, assume that $M' \subseteq E'$ is a solution for $(G',\col')$.
Note that for each $e=uv \in E$ either we have $y_{u,e}y_{v,e} \in M'$
or both $x_uy_{u,e}$ and $x_vy_{v,e}$ belong to $M'$.
Consequently we define $M \subseteq E$ to
be the set of edges $e=uv$ of $E$ such that $y_{u,e}y_{v,e} \not\in M'$.
It is easy to verify that $M$ is a solution for $(G,\col)$.
\end{proof}

The construction of the weak composition is rather involved. We
construct an instance graph $H_{inst}$ which maps feasible
solutions of each \CPM instance into feasible solutions of the
\RBDS instance. Then we add an enforcement gadget
$(H_{enf},E_{conn})$ which prevents partial solutions of two or
more \CPM instances to form altogether a solution for the \RBDS
instance. The overall \RBDS instance will be denoted by
$(H,k)$, where $H$ is the union of $H_{inst}$ and $H_{enf}$
along with the edges $E_{conn}$ that connect between these
graphs. The construction of the instance graph is relatively
simple, while the enforcement gadget is rather complex. In the
next subsection we describe $H_{enf}$ and its crucial
properties. In the following subsection we describe the rest of
the construction, and prove the claimed lower bound on \RBDS
(and hence \DS). Both $H_{enf}$ and $H_{inst}$ contain red and
blue nodes. We will use the convention that $R$ and $B$ denote
sets of red and blue nodes, respectively. We will use $r$ and
$b$ to indicate red and blue nodes, respectively. A color is
indicated by $\ell$.

\subsection{The Enforcement Graph}

The enforcement graph $H_{enf}=(R_{enf}\cup B_{enf},E_{enf})$
is a combination of $3$ different gadgets: the \emph{encoding
gadget}, the \emph{choice gadget}, and the \emph{fillin gadget}
(see also Fig.~\ref{fig:ds}), i.e. $R_{enf} = R_{code} \cup
R_{fill}$ and $B_{enf} = B_{code} \cup B_{choice} \cup
B_{fill}$ ($R_{choice}$ is empty).\\


\noindent \textbf{Encoding gadget:} The role of this gadget is
to encode the indices of all the instances by different partial
solutions. It consists of nodes $R_{code}\cup B_{code}$, plus
the edges among them. The set $R_{code}$ contains one node
$r_{\delta,\lambda,\gamma}$ for all integers $0\leq
\delta<d+2$, $0\leq \lambda<d$, and $0\leq \gamma<t$. In
particular, $|R_{code}|=(d+2)dt$. The set $B_{code}$ is the
union of sets $B^{\ell}_{code}$ for each color $0\leq
\ell<n/2$. In turn, $B^{\ell}_{code}$ contains a node
$b^{\ell}_a$ for each integer $0\leq a<(dt)^{d+2}$. We connect
nodes $r_{\delta,\lambda,\gamma}$ and $b^{\ell}_a$ iff
$a_\delta=\lambda \cdot t + \gamma$, where
$(a_0,\ldots,a_{d+1})$ is the expansion of $a$ in base $dt$,
i.e. $a=\sum_{0 \le \delta <d+2} a_\delta(dt)^\delta$. There is
a subtle reason behind this connection scheme, which hopefully
will be clearer soon. Note that since $0 \le \gamma < t$, pairs
$(\lambda,\gamma)$ are in one to one correspondence with
possible values of digits $a_\delta$.\\


\noindent \textbf{Choice gadget:} The role of the choice gadget
is to  guarantee the following \emph{choice property}: Any
feasible solution to the overall \RBDS instance $(H,k)$
contains all nodes $R_{code}$ except possibly one node
$r_{\delta,\lambda,\gamma_{\delta,\lambda}}$ for each pair
$(\delta,\lambda)$ (hence at least $(d+2)d(t-1)$ nodes of
$R_{code}$ altogether are taken). Intuitively, the
$\gamma_{\delta,\lambda}$'s will be used to identify the index
of one \CPM input instance. In order to do that, we introduce a
set of nodes $B_{choice}$, containing a node
$b_{\delta,\lambda,\gamma_1,\gamma_2}$ for every pair
$(\delta,\lambda)$ and for every $0\leq \gamma_1<\gamma_2<t$.
We connect $b_{\delta,\lambda,\gamma_1,\gamma_2}$ to both
$r_{\delta,\lambda,\gamma_1}$ and
$r_{\delta,\lambda,\gamma_2}$. It is not hard to see that, in
order to dominate $B_{choice}$, it is necessary and sufficient
to select from $R_{code}$ a subset of nodes with the choice
property.\\


\noindent \textbf{Fillin gadget:} We will guarantee that, in
any feasible solution, precisely $(d+2)d(t-1)$ nodes from
$R_{code}$ are selected. Given that, for each pair
$(\delta,\lambda)$, there will be precisely one node
$r_{\delta,\lambda,\gamma_{\delta,\lambda}}$  which is not
included in the solution. Consequently, as we will prove, for
each $0 \le \ell < n/2$ in $B^{\ell}_{code}$ there will be
exactly $d^{d+2}$ uncovered nodes, namely the nodes
$b^{\ell}_{a}=b^{\ell}_{(a_0,\ldots,a_{d+1})}$ such that for
each $0\leq \delta< d+2$ and $\lambda t\leq a_{\delta}<
(\lambda +1)t$ one has $a_{\delta}=\lambda t +
\gamma_{\delta,\lambda}$. Ideally, we would like to cover such
nodes by means of red nodes in the instance graph $H_{inst}$
(to be defined later), which encode a feasible solution to some
\CPM instance. However, the degeneracy of the overall graph
would be too large. The role of the fillin gadget is to
circumvent this problem, by leaving at most $d$ uncovered nodes
in each $B^{\ell}_{code}$. The fillin gadget consists of nodes
$R_{fill}\cup B_{fill}$, with some edges incident to them. The
set $R_{fill}$ is the union of sets $R^{\ell}_{fill}$ for each
color $\ell$. In turn $R^{\ell}_{fill}$ contains one node
$r^{\ell}_{a,j}$ for each $1\leq j\leq d^{d+2}-d$ and $0\leq
a<(dt)^{d+2}$. We connect each $r^{\ell}_{a,j}$ to
$b^{\ell}_{a}$. The set $B_{fill}$ contains one node
$b^{\ell}_j$, for each color $\ell$ and for all $1\leq j\leq
d^{d+2}-d$. We connect $b^{\ell}_j$ to all nodes
$\{r^{\ell}_{a,j} : 0 \le a < (dt)^{d+2}\}$. Observe that, in
order to cover $B_{fill}$, it is necessary and sufficient to
select one node $r^{\ell}_{a,j}$ for each $\ell$ and $j$.
Furthermore, there is a way to do that such that each selected
$r^{\ell}_{a,j}$ covers one extra node in $B^{\ell}_{code}$
w.r.t. selected nodes in $R_{code}$.
Note that we somewhat abuse notation as we denote by $b^{\ell}_j$
vertices of $B_{fill}$, while we use $b^{\ell}_a$
for vertices of $B_{code}$, hence the only distinction is by the variable name.


\begin{lemma}
\label{lem:enforcement-exists}%
For any matrix $(\gamma_{\delta,\lambda})_{0 \le \delta < d+2,
0 \le \lambda < d}$ of size $(d+2)\times d$ with entries from
$\{0, \ldots, t-1\}$, there exists a set $\tilde{R}_{enf}
\subseteq R_{enf}$ of size
$k':=\frac{n}{2}(d^{d+2}-d)+(d+2)d(t-1)$, such that:
\begin{itemize}
\item[$\bullet$] each vertex in $B_{choice} \cup B_{fill}$
    has a neighbor in $\tilde{R}_{enf}$, and
\item[$\bullet$] for every $0 \le \ell < n/2$ we have
    $B^\ell_{code} \setminus
    N(\tilde{R}_{enf})=\{b^\ell_a: 0 \le \lambda < d, a =
    \sum_{0 \le \delta < d+2} (\lambda t +
    \gamma_{\delta,\lambda})(dt)^\delta\}$.
\end{itemize}
\end{lemma}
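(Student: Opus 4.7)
The plan is to construct $\tilde{R}_{enf}$ explicitly and verify the stated properties by straightforward counting. I would begin by taking, from $R_{code}$, the set $R' := \{r_{\delta,\lambda,\gamma} : \gamma \neq \gamma_{\delta,\lambda}\}$ of size $(d+2)d(t-1)$, and then augment this with carefully chosen vertices from $R_{fill}$ so that the total size is exactly $k'$.

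The first easy observation is that $R'$ already dominates $B_{choice}$: each vertex $b_{\delta,\lambda,\gamma_1,\gamma_2}$ is adjacent to $r_{\delta,\lambda,\gamma_1}$ and $r_{\delta,\lambda,\gamma_2}$, and at most one of the indices $\gamma_1,\gamma_2$ equals $\gamma_{\delta,\lambda}$, so at least one of these neighbors lies in $R'$. The second observation identifies the uncovered portion of $B_{code}$. Writing $a = \sum_\delta a_\delta (dt)^\delta$, the vertex $b^\ell_a$ has, for each $\delta$, exactly one neighbor in $R_{code}$, namely $r_{\delta, \lambda, \gamma}$ where $(\lambda, \gamma) \in \{0,\ldots,d-1\} \times \{0,\ldots,t-1\}$ is the unique pair satisfying $a_\delta = \lambda t + \gamma$. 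Hence $b^\ell_a$ is undominated by $R'$ iff for every $\delta$ this pair satisfies $\gamma = \gamma_{\delta, \lambda}$, i.e., $a_\delta = \lambda_\delta t + \gamma_{\delta,\lambda_\delta}$ for some $\lambda_\delta \in \{0,\ldots,d-1\}$. Let $S^\ell \subseteq \{0, \ldots, (dt)^{d+2} - 1\}$ denote the set of such indices; since the choices across $\delta$ are independent, $|S^\ell| = d^{d+2}$. The $d$ indices targeted by the lemma form the subset $T^\ell \subseteq S^\ell$ obtained by requiring all $\lambda_\delta$ to equal a single common $\lambda \in \{0,\ldots,d-1\}$.

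To finish, I would use the fillin gadget to cover exactly $S^\ell \setminus T^\ell$ for every color. Since $|S^\ell \setminus T^\ell| = d^{d+2} - d$, I fix an arbitrary bijection $\phi^\ell : \{1, \ldots, d^{d+2} - d\} \to S^\ell \setminus T^\ell$ and add $r^\ell_{\phi^\ell(j), j}$ to $\tilde{R}_{enf}$ for every $\ell$ and every $j$. This contributes $\frac{n}{2}(d^{d+2}-d)$ vertices, so $|\tilde{R}_{enf}| = k'$ as required. Each such fillin vertex dominates $b^\ell_j \in B_{fill}$ and $b^\ell_{\phi^\ell(j)} \in B^\ell_{code}$, and has no other neighbors; consequently $B_{fill}$ is fully dominated, while the vertices of $B^\ell_{code}$ that remain uncovered are exactly those indexed by $T^\ell$, matching the description in the lemma.

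The most delicate point is the bookkeeping of the second step: verifying that the set of $R_{code}$-undominated indices decomposes as a product over the $d+2$ coordinates, that this product has size exactly $d^{d+2}$, and that the ``diagonal'' subset with constant $\lambda_\delta$ has size exactly $d$. Once this combinatorial accounting is clear, the rest follows immediately from the fact that each fillin vertex $r^\ell_{a,j}$ has support $\{b^\ell_a, b^\ell_j\}$ disjoint from $\{b^\ell_{a'} : a' \in T^\ell\}$.
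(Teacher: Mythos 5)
Your construction is exactly the paper's: take all of $R_{code}$ except the $(d+2)d$ vertices $r_{\delta,\lambda,\gamma_{\delta,\lambda}}$, observe that the $R_{code}$-undominated part of $B^\ell_{code}$ is the product set of size $d^{d+2}$, and then use one fillin vertex per pair $(\ell,j)$ to cover everything outside the $d$-element diagonal. The argument is correct and, if anything, spells out the coordinate-wise bookkeeping slightly more explicitly than the paper does.
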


\begin{proof}
For each $0 \le \delta < d+2$ and $0 \le \lambda < d$, add to
$\tilde{R}_{enf}$ the set $\{r_{\delta,\lambda,\gamma} : 0 \le
\gamma < t, \gamma \neq \gamma_{\delta,\lambda}\}$ containing
$t-1$ vertices. Note that by construction $\tilde{R}_{enf}$
dominates the whole set $B_{choice}$. Consider a vertex
$b^\ell_a \in B^\ell_{code} \setminus N(\tilde{R}_{enf})$ and
observe that for each coordinate $0 \le \delta < d+2$, there
are exactly $d$ values that $a_\delta$ can have, where
$(a_0,\ldots,a_{d+1})$ is the $(dt)$-ary representation of $a$.
Indeed, for any $0 \le \delta < d+2$, we have $a_\delta \in
X_\delta = \{\lambda t + \gamma_{\delta,\lambda} : 0 \le
\lambda < d\}$, since otherwise $b^\ell_a$ would be covered by
$\tilde{R}_{enf}$ due to the $\delta$-th coordinate. Moreover
if we consider any $b^\ell_{(a_0,\ldots,a_{d+1})} \in
B^\ell_{code}$ such that $a_\delta \in X_\delta$ for $0 \le
\delta < d+2$, then $b^\ell_{(a_0,\ldots,a_{d+1})}$ is not
dominated by the vertices added to $\tilde{R}_{enf}$ so far.

Next, for each $\ell$ define $M^\ell:=\{b^\ell_a: 0 \le \lambda
< d, a = \sum_{0 \le \delta < d+2} (\lambda t +
\gamma_{\delta,\lambda})(dt)^\delta\}$ and observe that
$M^\ell$ are not dominated $\tilde{R}_{enf}$. For each $0 \le
\ell < n/2$, let $Z^\ell$ be the vertices of $B^\ell_{code}$
not yet covered by $\tilde{R}_{enf}$ and for each $1 \le j \le
d^{d+2}-d$ select exactly one distinct vertex $v_j \in Z^\ell
\setminus M^\ell$, where $v_j = b^\ell_a$, and add to
$\tilde{R}_{enf}$ the vertex $r^\ell_ {a,j}$. Observe that
after this operation $\tilde{R}_{enf}$ covers $B_{fill}$ and
moreover the only vertices of $B_{code}$ not covered by
$\tilde{R}_{enf}$ are the vertices of $\bigcup_{0 \le \ell <
n/2} M^\ell$. Since the total size of $\tilde{R}_{enf}$ equals
$d(d+2)(t-1)+\frac{n}{2}(d^{d+2}-d)$, the lemma follows. \qed
\end{proof}

\begin{lemma}
\label{lem:enforcement-forall}%
Consider an \RBDS instance $(H=(R\cup B,E),k)$ containing
$G_{enf}=(R_{enf}\cup B_{enf},E_{enf})$ as an induced subgraph,
with $R_{enf}\subseteq R$ and $B_{enf}\subseteq B$, such that
no vertex of $B_{choice} \cup B_{fill}$ has a neighbor outside
of $R_{enf}$. Then any feasible solution $\tilde{R}$ to $(H,k)$
contains at least $k':=\frac{n}{2}(d^{d+2}-d)+(d+2)d(t-1)$
nodes $\tilde{R}_{enf}$ of $R_{enf}$. Furthermore, for any
feasible solution $\tilde{R}$ to $(H,k)$ containing exactly
$k'$ vertices of $R_{enf}$, there exist a matrix
$(\gamma_{\delta,\lambda})_{0 \le \delta < d+2, 0 \le \lambda <
d}$ of size $(d+2)\times d$ with entries from $\{0, \ldots,
t-1\}$, such that for each $0 \le \ell < n/2$:
\begin{itemize}
\item[(a)] there are at least $d$ vertices in
    $U^\ell=B^\ell_{code} \setminus N(\tilde{R} \cap
    R_{enf})$, and
\item[(b)] $U^\ell$ is a subset of the $d^{d+2}$ nodes
    $b^{\ell}_{a}=b^{\ell}_{(a_0,\ldots,a_{d+1})}$ such
    that for each $\delta \in \{0,\ldots,d+1\}$ there
    exists $\lambda \in \{0,\ldots d-1\}$ with
    $a_{\delta}=\lambda t + \gamma_{\delta,\lambda}$.
\end{itemize}
\end{lemma}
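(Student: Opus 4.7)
The plan is to prove the two parts separately, first establishing the lower bound $k'$ on $|\tilde{R}_{enf}|$ by counting independently in $R_{code}$ and $R_{fill}$, and then analyzing the structure forced by equality.

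For the lower bound I would split $\tilde{R}_{enf}$ into $\tilde{R}_{enf} \cap R_{code}$ and $\tilde{R}_{enf} \cap R_{fill}$ and argue separately. Since no vertex of $B_{choice} \cup B_{fill}$ has a neighbor outside $R_{enf}$, every such vertex must be dominated from within $R_{enf}$. For $B_{fill}$: the neighborhoods $\{r^{\ell}_{a,j} : 0\le a < (dt)^{d+2}\}$ of distinct vertices $b^{\ell}_j$ are pairwise disjoint and lie in $R_{fill}$, so at least one vertex per pair $(\ell,j)$ must be selected, giving $\ge \frac{n}{2}(d^{d+2}-d)$ vertices in $R_{fill}$. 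For $B_{choice}$: for each fixed $(\delta,\lambda)$, the vertices $\{b_{\delta,\lambda,\gamma_1,\gamma_2} : 0\le \gamma_1<\gamma_2<t\}$ have their neighbors entirely inside $\{r_{\delta,\lambda,\gamma} : 0\le \gamma<t\}$ and form the edge set of a complete graph $K_t$, whose vertex cover number is $t-1$. Summing over the $(d+2)d$ pairs yields $\ge (d+2)d(t-1)$ vertices in $R_{code}$. Adding the two bounds gives $|\tilde{R}_{enf}|\ge k'$.

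Next, assuming $|\tilde{R}_{enf}|=k'$, I would extract the matrix $(\gamma_{\delta,\lambda})$ by noting that equality forces exactly $t-1$ vertices from each family $\{r_{\delta,\lambda,\gamma} : 0\le \gamma<t\}$ to lie in $\tilde{R}_{enf}$; define $\gamma_{\delta,\lambda}$ to be the unique missing index. For part (b), I would observe that by construction a vertex $r_{\delta,\lambda,\gamma}\in \tilde{R}_{enf}$ dominates $b^{\ell}_a$ precisely when $a_\delta=\lambda t+\gamma$. Hence if $b^{\ell}_a\in U^\ell$ then for every coordinate $\delta$ the value $a_\delta$ must avoid every pair $(\lambda,\gamma)$ with $r_{\delta,\lambda,\gamma}\in \tilde{R}_{enf}$; since the only missing pair for each $\lambda$ is $(\lambda,\gamma_{\delta,\lambda})$, this forces $a_\delta\in\{\lambda t+\gamma_{\delta,\lambda} : 0\le \lambda<d\}$, which gives exactly the $d^{d+2}$ candidate vertices claimed in (b).

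Finally, for part (a), I would use a simple double count: the set of vertices in $B^\ell_{code}$ not dominated by $\tilde{R}_{enf}\cap R_{code}$ is precisely the set of $d^{d+2}$ vertices of the form described in (b) (all other $b^{\ell}_a$ are hit by some selected $r_{\delta,\lambda,\gamma}$ via any coordinate $\delta$ where $a_\delta\notin X_\delta:=\{\lambda t+\gamma_{\delta,\lambda}\}$). Among these $d^{d+2}$ uncovered vertices, the only additional dominators available in $R_{enf}$ lie in $R^\ell_{fill}$, and each selected $r^\ell_{a,j}$ dominates the single vertex $b^\ell_a$. Because equality forces exactly $d^{d+2}-d$ vertices from $R^\ell_{fill}$ to be selected (one per $j$), at most $d^{d+2}-d$ of the candidates can be covered this way, leaving at least $d$ vertices in $U^\ell$.

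I do not expect any real obstacle; the only thing to watch is the bookkeeping that the $R_{code}$ and $R_{fill}$ lower bounds are independent (their neighborhoods in $B_{choice}\cup B_{fill}$ are disjoint) so their sum is valid, and that equality in the global count forces equality in both parts, which is what lets me define the $\gamma_{\delta,\lambda}$ cleanly and conclude that each color class $\ell$ contributes exactly $d^{d+2}-d$ vertices of $R^\ell_{fill}$ for the covering count in part (a).
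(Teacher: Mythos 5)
Your proposal is correct and follows essentially the same argument as the paper: lower-bound $|\tilde{R}\cap R_{code}|$ and $|\tilde{R}\cap R_{fill}|$ separately via the disjoint neighborhoods of $B_{choice}$ and $B_{fill}$, use tightness to extract the unique missing $\gamma_{\delta,\lambda}$ per family, identify the $d^{d+2}$ candidates uncovered by $R_{code}$, and observe that the $d^{d+2}-d$ selected fill vertices per color can remove at most $d^{d+2}-d$ of them. Your write-up is, if anything, slightly more explicit than the paper's (e.g.\ the $K_t$ vertex-cover observation and the disjointness bookkeeping), but there is no substantive difference.
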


\begin{proof}
Let $\tilde{R}$ be any feasible solution to $(H,k)$. Observe
that since $\tilde{R}$ dominates $B_{choice}$, for each $0 \leq
\delta < d+2$ and $0 \leq \lambda < d$ we have $|\tilde{R} \cap
\{r_{\delta,\lambda,\gamma} : 0 \leq \gamma < t\}| \geq t-1$.
Moreover in order to dominate vertices of $B_{fill}$, the set
$\tilde{R}$ has to contain at least $n/2(d^{d+2}-d)$ vertices
of $R_{fill}$. Consequently, if $\tilde{R}$ contains exactly
$k'$ vertices of $R_{enf}$, then for each $0 \leq \delta < d+2$ and
$0 \leq \lambda <d$, there is exactly one
$\gamma_{\delta,\lambda}$ such that
$r_{\delta,\lambda,\gamma_{\delta,\lambda}} \notin \tilde{R}$.
By the same argument as in the proof of
Lemma~\ref{lem:enforcement-exists}, we infer that for each
$\ell$, the set $B^\ell_{code} \setminus N(\tilde{R} \cap
R_{code})$ contains exactly $d^{d+2}$ vertices, and we denote
them as $U^\ell_0$. Observe that the set $\tilde{R} \setminus
R_{code}$ dominates at most $d^{d+2} -d$ vertices of
$U^\ell_0$, for each $0 \le \ell < n/2$, which proves
properties $(a)$ and $(b)$ of the lemma. \qed
\end{proof}

\subsection{The Overall Graph}

The construction of $H_{inst}=(R_{inst}\cup B_{inst},E_{inst})$
is rather simple. Let $(G_i=(V,E_i),\col_i)$ be the input \CPM
instances, with $0\leq i< T=t^{d(d+2)}$. By standard padding
arguments we may assume that all the graphs $G_i$ are defined
over the same set $V$ of even size $n$, i.e. $G_i=(V,E_i)$. For
each $v\in V$, we create a blue node $b_v\in B_{inst}$. For
each $e_i=\{u,v\}\in E_i$, we create a red node $r_{e,i}\in
R^i_{inst}$ and connect it to both $b_u$ and $b_v$. We let
$R_{inst}:=\bigcup_{0 \leq i <T}R^i_{inst}$. Intuitively, we
desire that a \RBDS solution, if any, selects exactly $n/2$
nodes from one set $R^i_{inst}$, corresponding to edges of
different colors, which together dominate all nodes $B_{inst}$:
This induces a feasible solution to \CPM for the $i$-th
instance.

It remains to describe the edges $E_{conn}$ which connect
$H_{enf}$ with $H_{inst}$. This is the most delicate part of
the entire construction. We map each index $i$, $0\leq i< T$,
into a distinct $(d+2)\times d$ matrix $M_i$ with entries
$M_i[\delta,\lambda]\in \{0,\ldots,t-1\}$, for all possible
values of $\delta$ and $\lambda$. Consider an instance $G_i$.
We connect $r_{e,i}$ to $b^{\ell}_{a}$ iff $\ell=\col_i(e_i)$
and there exists $0 \le \lambda < d$ such that the expansion
$(a_0,\ldots,a_{d+1})$ of $a$ in base $dt$ satisfies
$a_\delta=M_i[\delta,\lambda]+\lambda\cdot t$ on each
coordinate $0 \le \delta < d+2$. The final graph $H:=(R \cup
B,E)$ we construct for our instance \RBDS is then given by
$R:=R_{inst} \cup R_{enf}$ and $E:=E_{inst} \cup E_{enf} \cup
E_{conn}$. See Fig.~\ref{fig:ds}.

\begin{figure}
\begin{center}
\includegraphics[scale=1]{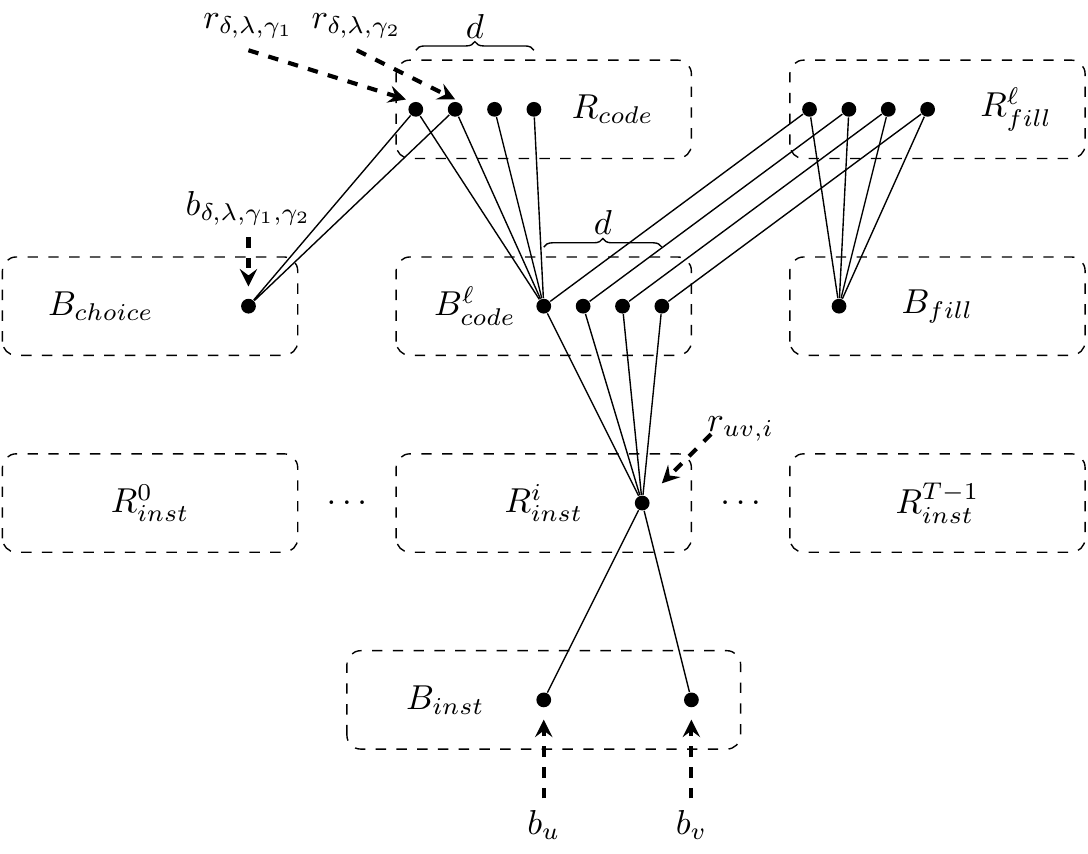}
\caption{Construction of the graph $H$. For simplicity the figure does not include sets $R^{\ell'}_{fill}$ and $B^{\ell'}_{code}$ for $\ell'\neq \ell$.}
\label{fig:ds}
\end{center}
\end{figure}

\begin{lemma}
\label{lem:ds-deg}%
$H$ is $(d+2)$-degenerate.
\end{lemma}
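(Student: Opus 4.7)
The plan is to exhibit an elimination ordering of $V(H)$ in which every vertex has at most $d+2$ neighbors to its right; equivalently, we repeatedly peel vertices of degree at most $d+2$ from the current subgraph until nothing remains. We process the vertex classes in the following order:
\[
B_{choice}, \; R_{fill}, \; B_{fill}, \; R_{inst}, \; B_{inst}, \; B_{code}, \; R_{code}.
\]
Within each class the order among vertices is arbitrary, since removing vertices of one class never increases the degree of those remaining.

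First I would bound the degrees in the original graph $H$ for the vertices of the three ``easy'' classes:
every $b_{\delta,\lambda,\gamma_1,\gamma_2}\in B_{choice}$ has exactly $2$ neighbors (namely $r_{\delta,\lambda,\gamma_1}$ and $r_{\delta,\lambda,\gamma_2}$), and every $r^{\ell}_{a,j}\in R_{fill}$ has exactly $2$ neighbors (the single node $b^{\ell}_a\in B_{code}$ and the single node $b^{\ell}_j\in B_{fill}$). Hence $B_{choice}$ and then $R_{fill}$ can be peeled off; once $R_{fill}$ is gone, the vertices of $B_{fill}$ become isolated and can be peeled trivially.

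Next I would handle $R_{inst}$. For $r_{e,i}\in R_{inst}$ with $e=\{u,v\}$, the neighbors in $H$ consist of $b_u,b_v\in B_{inst}$ and, by the definition of $E_{conn}$, those $b^{\ell}_a\in B_{code}$ with $\ell=\col_i(e)$ for which there exists $0\le \lambda<d$ with $a_\delta=M_i[\delta,\lambda]+\lambda t$ for every $0\le \delta<d+2$. For each $\lambda$ such an $a$ is uniquely determined, giving exactly $d$ neighbors in $B_{code}$, for a total degree of $2+d=d+2$. So $R_{inst}$ can be peeled; afterwards $B_{inst}$ is isolated and disappears.

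At this stage only $R_{code}\cup B_{code}$ remains. I would finish by peeling $B_{code}$: since the base-$dt$ expansion of $a$ is unique, each $b^{\ell}_a$ has precisely one neighbor $r_{\delta,\lambda,\gamma}$ per coordinate $\delta\in\{0,\dots,d+1\}$, namely the one with $\lambda t+\gamma=a_\delta$. That is $d+2$ neighbors in total. Peeling all of $B_{code}$ leaves $R_{code}$ isolated, completing the elimination. The maximum degree observed at peeling time is $d+2$, hence $H$ is $(d+2)$-degenerate. The only mildly delicate point is the counting for $r_{e,i}$ and $b^{\ell}_a$, where one must check that the ``$\lambda$ indexes the neighbor'' and ``unique base-$dt$ digit'' arguments really produce exactly $d$ and exactly $d+2$ neighbors respectively; everything else is immediate from the construction. \qed
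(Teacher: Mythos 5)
Your proposal is correct and follows essentially the same approach as the paper: both exhibit a peeling order over the vertex classes and verify the same degree counts ($B_{choice}$ and $R_{fill}$ have degree $2$, each $r_{e,i}$ has degree exactly $d+2$, and each $b^{\ell}_a$ has exactly $d+2$ neighbors in $R_{code}$, one per coordinate). The only difference is the order in which the classes are peeled, which is immaterial.
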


\begin{proof}
Observe that each vertex of $\bigcup_{0 \le i < T} R^i_{inst}$
is of degree exactly $d+2$ in $H$, since it is adjacent to
exactly two vertices of $B_{inst}$ and exactly $d$ vertices of
the enforcement gadget,
so we put all those vertices first to our ordering. Next, we
take vertices of $B_{inst}$, as those have all neighbors
already put into the ordering. Therefore it is enough to argue
about the $(d+2)$-degeneracy of the enforcement gadget. We
order vertices of $R_{fill} \cup B_{choice}$, since those are
of degree exactly two in $H$. In $H \setminus R_{fill}$ the
vertices of $B_{fill}$ become isolated, so we put them next to
our ordering. We are left with the vertices of the encoding
gadget. Observe, that each blue vertex of the encoding gadget
has exactly $d+2$ neighbors in $R_{code}$, one due to each
coordinate, hence we put the vertices of $B_{code}$ next and
finish the ordering with vertices of $R_{code}$. \qed
\end{proof}

\begin{lemma}
\label{lem:ds-mapping}%
Let $k:=(d+2)d(t-1)+n/2(d^{d+2}-d)+n/2=k'+n/2$. Then $(H,k)$ is
a YES-instance of \RBDS iff $(G_i,\col_i)$ is a YES-instance of
\CPM for some $i \in \{0,\ldots,T-1\}$.
\end{lemma}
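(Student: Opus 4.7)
The plan is to establish both directions of the equivalence by combining Lemma~\ref{lem:enforcement-exists} and Lemma~\ref{lem:enforcement-forall} with a tight double-counting argument on how $H_{inst}$ interacts with the uncovered blue nodes of the encoding gadget.

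For the forward direction, given a colorful perfect matching $M$ of $(G_i,\col_i)$, I would invoke Lemma~\ref{lem:enforcement-exists} on the matrix $M_i$ to obtain a set $\tilde R_{enf} \subseteq R_{enf}$ of size $k'$ that dominates $B_{choice}\cup B_{fill}$ and leaves uncovered in each $B^\ell_{code}$ exactly the $d$ nodes $b^\ell_{a_\lambda}$ with $a_\lambda = \sum_\delta (\lambda t + M_i[\delta,\lambda])(dt)^\delta$. For each $e\in M$ of color $\ell=\col_i(e)$, I would add $r_{e,i}$ to the solution; by the definition of $E_{conn}$, $r_{e,i}$ is adjacent in $B^\ell_{code}$ to precisely these $d$ uncovered nodes. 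Since $M$ is perfect and colorful, the added $n/2$ vertices also dominate $B_{inst}$, yielding a solution of size $k'+n/2=k$.

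For the backward direction, let $\tilde R$ be a solution of size $k$. The only neighbors of $B_{inst}$ lie in $R_{inst}$ and each vertex of $R_{inst}$ dominates exactly two vertices of $B_{inst}$, hence $|\tilde R\cap R_{inst}|\geq n/2$. Combined with $|\tilde R\cap R_{enf}|\geq k'$ from Lemma~\ref{lem:enforcement-forall}, both inequalities must be equalities; the $n/2$ selected instance vertices then cover each vertex of $V$ exactly once and thus correspond to a perfect matching on $V$, and Lemma~\ref{lem:enforcement-forall} further supplies a matrix $(\gamma_{\delta,\lambda})$ such that for every color $\ell$ the uncovered set $U^\ell$ has size at least $d$ and is contained in the $d^{d+2}$ candidate nodes indexed by this matrix.

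The main obstacle is then to show that all $n/2$ selected instance vertices belong to the same $R^i_{inst}$. I would double-count adjacencies between $\tilde R\cap R_{inst}$ and $\bigcup_\ell U^\ell$: each $r_{e,j}$ with $\col_j(e)=\ell$ is adjacent to exactly $d$ nodes of $B^\ell_{code}$, namely $\{b^\ell_{a_\lambda^j}\}_{\lambda=0}^{d-1}$ with $a_\lambda^j=\sum_\delta (M_j[\delta,\lambda]+\lambda t)(dt)^\delta$, so $dn/2\leq \sum_\ell |U^\ell|\leq d\cdot|\tilde R\cap R_{inst}|=dn/2$. This pins down $|U^\ell|=d$ for every $\ell$, forces exactly one selected edge per color, and requires the $d$ neighbors of that edge in $B^\ell_{code}$ to coincide with $U^\ell$. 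For each such neighbor to even lie in the candidate set, the uniqueness of the base-$dt$ expansion (using $0\leq M_j[\delta,\lambda],\gamma_{\delta,\lambda'}<t$) forces $M_j[\delta,\lambda]=\gamma_{\delta,\lambda}$ for every $\delta$ and every $\lambda$, i.e. $M_j=(\gamma_{\delta,\lambda})$. Since the matrices $M_0,\dots,M_{T-1}$ are pairwise distinct by construction, a single index $i$ is identified, all selected instance vertices lie in $R^i_{inst}$, and the associated perfect matching in $G_i$ uses edges of pairwise distinct colors, giving the desired colorful perfect matching.
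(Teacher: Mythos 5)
Your proof is correct and follows essentially the same route as the paper's: the forward direction is identical (apply Lemma~\ref{lem:enforcement-exists} with $M_{i_0}$ and add the $n/2$ matching vertices), and the backward direction uses the same tightness of $|\tilde R\cap R_{enf}|=k'$ and $|\tilde R\cap R_{inst}|=n/2$ together with properties (a) and (b) of Lemma~\ref{lem:enforcement-forall}. The only cosmetic difference is that you pin down one selected edge per color via an explicit double count of adjacencies to $\bigcup_\ell U^\ell$ and then derive $M_j=(\gamma_{\delta,\lambda})$ directly from uniqueness of the base-$dt$ expansion, whereas the paper argues by contradiction that a mismatched entry leaves a vertex of $U^\ell$ undominated; both arguments are sound and equivalent.
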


\begin{proof}
Let us assume that for some $i_0$ the instance
$(G_{i_0},\col_{i_0})$ is a YES-instance and $E' \subseteq
E_{i_0}$ is the corresponding solution. We use
Lemma~\ref{lem:enforcement-exists} with the matrix $M_{i_0}$
assigned to the instance $i_0$ to obtain the set
$\tilde{R}_{enf}$ of size $(d+2)d(t-1)+\frac{n}{2}(d^{d+2}-d)$.
As the set $\tilde{R}$ we take $\tilde{R}_{enf} \cup
\{r_{e,i_0} : e\in E'\}$. Clearly $|\tilde{R}| = k$. Since $E'$
is a perfect matching, $\tilde{R}$ dominates $B_{inst}$. By
Lemma~\ref{lem:enforcement-exists}, $\tilde{R}$ dominates
$B_{fill} \cup B_{choice}$ and all but $d$ vertices of each
$B^\ell_{code}$, so denote those $d$ vertices by $M^\ell$.
Consider each $0 \le \ell < n/2$, and observe that since $E'$
is multicolored and by the construction of $H$, the set of
neighbors of $r_{e,i_0}$ in $B_{code}$ is exactly
$M^{\col_{i_0}(e)}$; and hence $\tilde{R}$ is a solution for
$(H,k)$.

In the other direction, assume that $(H,k)$ is a YES-instance
and let $\tilde{R}$ be a solution of size at most $k$. By
Lemma~\ref{lem:enforcement-forall}, the set $\tilde{R}$
contains at least $k'=\frac{n}{2}(d^{d+2}-d)+(d+2)d(t-1)$
vertices of $R_{enf}$ and since $\tilde{R}$ needs to dominate
also $B_{inst}$ it contains at least $\frac{n}{2}$ vertices of
$\bigcup_{0 \le i < T} R^i_{inst}$, since no vertex of $H$
dominates more than two vertices of $B_{inst}$. Consequently
$|\bigcup_{0 \le i < T} R^i_{inst} \cap \tilde{R}| = n/2$ and
$|R_{enf} \cap \tilde{R}|=k'$. We use
Lemma~\ref{lem:enforcement-forall} to obtain a matrix
$M=(\gamma_{\delta,\lambda})$ of size $(d+2)\times d$.
Moreover, by property (a) of
Lemma~\ref{lem:enforcement-forall}, there are at least $d$
vertices in $U^\ell$, and consequently for each color $\ell$
the set $\tilde{R}$ contains exactly one vertex of the set
$\{r_{e,i}: 0 \le i < T, \col_i(e) = \ell\}$. Our goal is to
show that for each $0 \le i < T$, such that a matrix different
than $M$ is assigned to the $i$-th instance, we have $\tilde{R}
\cap R^i_{inf}=\emptyset$, which is enough to finish the proof
of the lemma. Consider any such $i$ and assume that the
matrices $M_i$ and $M$ differ in the entry
$M_i[\delta',\lambda'] \neq \gamma_{\delta',\lambda'}$. Let
$\ell$ be a color such that $r_{e,i} \in \tilde{R}$ and
$\col_i(e) = \ell$. By property (b) of
Lemma~\ref{lem:enforcement-forall}, the set of at least $d$
vertices of $B^\ell_{code}$ not dominated by $\tilde{R} \cap
R_{enf}$ is contained in
$U^\ell_0=\{b^\ell_{(a_0,\ldots,a_{d+1})} : \forall_{0 \le
\delta < d+2} \text{ if } \lambda t \le a_\delta < (\lambda+1)t
\text{ then } a_\delta=\lambda t+\gamma_{\delta,\lambda}\}$.
However, by our construction of edges of $H$ between
$R^i_{inst}$ and $B^\ell_{code}$, we have $(N_H(r_{e,i}) \cap
B^\ell_{code}) \not\subseteq U^\ell_0$ since the vertex
$b^\ell_{(a_0,\ldots,a_{\delta+1})} \in N_H(r_{e,i}) \cap
B^\ell_{code}$ with $a_\delta=\lambda' t +
M_i[\delta',\lambda']$ does not belong to $U^\ell_0$ and
consequently does not belong to $U^\ell$, which leaves at least
one vertex of $B^\ell_{code}$ not dominated by $\tilde{R}$; a
contradiction. \qed
\end{proof}

Lemmas \ref{lem:ds-deg} and \ref{lem:ds-mapping} imply that,
for any $d \ge 1$, there exists a weak $d(d+2)$-composition
from \CPM to \RBDS in $(d+2)$-degenerate graphs. The proof of
Theorem~\ref{thm:ds} thus follows from Lemmas~\ref{Lemma:
d-composition gives lower bounds} and~\ref{lem:rbds}.


\section{Independent Dominating Set}
\label{Section: Independent Dominating Set}

The \IDSlong (\IDS) problem is the variant of \DS where we
require the dominating set $S$ to induce an independent set
(i.e. nodes in $S$ have to be pairwise non-adjacent). We next
describe a weak $d$-composition from \tESC, to \IDS in
$(d+4)$-degenerate graphs. The input of \tESC is a set system
$(U,\cF)$, where each set in $\cF$ contains exactly three
elements and the question is whether there is a collection $\cS
\subseteq \cF$ of disjoint sets which partition $U$, i.e.
$\bigcup \cS = U$. The \tESC problem is NP-complete by a
reduction from \TDM, which is NP-complete due to
Karp~\cite{karp-21}.

Consider a fixed value of $d \ge 1$ and let $(\cF_0,U_0),
\ldots, (\cF_{T-1},U_{T-1})$ be $T := t^d$ instances of \tESC.
Without loss of generality we can assume that in each
instance the same universe $U$ of size $n$ is used, such that
$n \equiv 0 \pmod 3$, since if for some $i$ we have $|U_i| \neq
0 \pmod 3$, then $(\cF_i, U_i)$ is clearly a NO-instance, and
moreover we can pad each universe with additional triples to
make sure that each $U_i$ has the same cardinality. Therefore
we assume that for each $i=0,\ldots,T-1$ we have $U_i = U$ and
$|U| = n$.

\begin{figure}
\begin{center}
\includegraphics{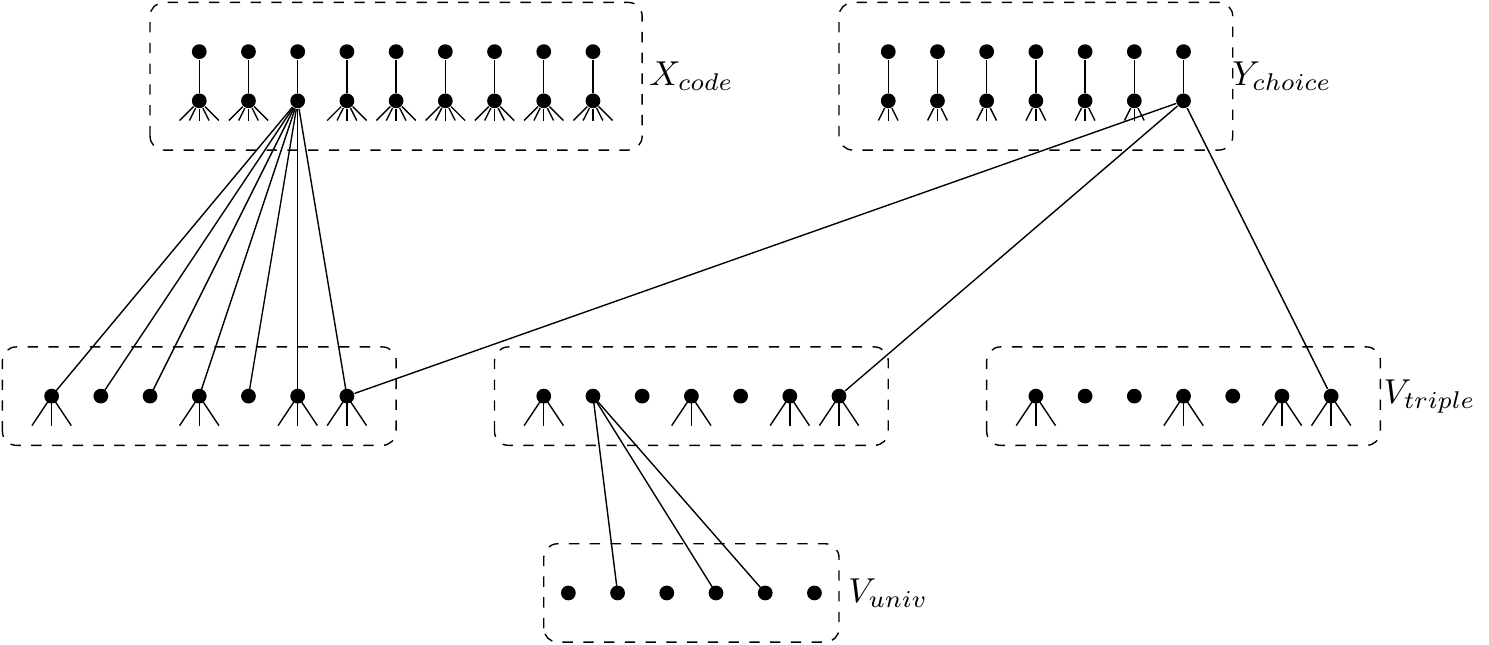}
\caption{The construction of the graph $H$ in the composition for \IDSlong.}
\label{fig:ids}
\end{center}
\end{figure}

We construct an instance $(H=(V,E),k)$ of \IDS for a properly
chosen parameter $k$. Similar to construction in
Section~\ref{Section: Dominating Set}, the graph $H$ will
contain two graphs, the {\em enforcement graph}
$H_{enf}=(V_{enf},E_{enf})$ and the {\em instance graph}
$H_{inst}=(V_{inst},E_{inst})$, plus some edges $E_{conn}$
connecting them (see Fig.~\ref{fig:ids}). The graph $H_{inst}$ contains the node set $V_{univ}:=\{v_u:
u\in U\}$ (i.e., one node per element of the universe).
Furthermore, it contains a node set $V_{triple}$ which includes
one node $v_{i,S}$ for every instance index $0\leq i<T$ and
every triple of elements of the universe $S\in \binom{U}{3}$.
We connect nodes $v_u$ and $v_{i,S}$ iff $S \in \cF_i$ and
$u\in S$. Observe that there might be some isolated nodes in
$V_{triple}$. The graph $H_{enf}$ consists of two induced
matchings. It contains an \emph{encoding matching} $X_{code}$
with edges $\{x_{\gamma,\delta},x'_{\gamma,\delta}\}$ for all
integers $0 \le \gamma < t$ and $0 \le \delta < d$.
Furthermore, it contains a \emph{choice matching} $Y_{choice}$
with edges $\{y_{S},y'_{S}\}$ for all triples $S\in
\binom{U}{3}$.
It remains to describe $E_{conn}$. First of all, we connect
each $y_{S}$ to every $v_{i,S}$. Second, we connect
$x_{\gamma,\delta}$ with $v_{i,S}$ iff $i_{\delta}=\gamma$
where $(i_0,\ldots,i_{d-1})$ is the $t$-ary expansion of index
$i$, \emph{i.e.} $i=\sum_{\delta=0}^{d-1}i_\delta \cdot
t^\delta$ with $0 \leq i_\delta < t$.

\begin{lemma}
\label{lem:ids-deg}
The graph $H$ is $d+4$ degenerate.
\end{lemma}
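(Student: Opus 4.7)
The plan is to produce an explicit ordering of $V(H)$ witnessing $(d+4)$-degeneracy. I would place all triple-nodes of $V_{triple}$ first (in arbitrary order), and then the remaining vertices $V_{univ} \cup V_{enf}$ in arbitrary order. The only thing that needs verification is that every vertex has at most $d+4$ later neighbors in this ordering, which I would split into two cases according to whether the vertex lies in $V_{triple}$ or not.

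For a triple-node $v_{i,S}$ I would bound its total degree in $H$ by enumerating incident edges according to the construction: at most $3$ edges into $V_{univ}$ (namely to $v_u$ for each $u \in S$, and only when $S \in \cF_i$); exactly $1$ edge into the choice matching $Y_{choice}$, namely to $y_S$; and exactly $d$ edges into the encoding matching $X_{code}$, one per coordinate $\delta \in \{0,\ldots,d-1\}$, namely to $x_{i_\delta,\delta}$ where $i_\delta$ is the $\delta$-th digit of the $t$-ary expansion of $i$. Since no pair of triple-nodes is adjacent in $H$ and the primed vertices $x'_{\gamma,\delta}$ and $y'_S$ are not incident to any connector edge, these account for all neighbors of $v_{i,S}$, giving total degree at most $3+1+d = d+4$.

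For the suffix of the ordering I would observe that deleting $V_{triple}$ from $H$ leaves $V_{univ}$ isolated, since its only neighbors in $H$ lie in $V_{triple}$, while $X_{code}$ and $Y_{choice}$ remain as disjoint induced matchings in $H_{enf}$; hence every residual vertex has degree at most $1$, certainly at most $d+4$ later neighbors. The argument involves no real obstacle, but the delicate point is the enumeration of neighbors at $v_{i,S}$: one must correctly separate the two connector rules and check that the $X_{code}$ contribution is exactly one vertex per coordinate $\delta$ (determined by the unique $\gamma$ with $i_\delta=\gamma$) rather than multiple, and also confirm that triple-nodes form an independent set, so that no further edges were accidentally introduced by the construction.
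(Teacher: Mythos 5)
Your proof is correct and follows the same route as the paper's: order $V_{triple}$ first, bound each triple-node's degree by $d$ (into $X_{code}$) plus $1$ (into $Y_{choice}$) plus at most $3$ (into $V_{univ}$), and note that after deleting $V_{triple}$ every remaining vertex has degree at most $1$. Your "at most $3$" for the $V_{univ}$ contribution is in fact slightly more careful than the paper's "exactly three", since triple-nodes $v_{i,S}$ with $S \notin \cF_i$ have no neighbors in $V_{univ}$.
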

\begin{proof}
Consider any ordering of vertices where we put vertices of
$V_{triple}$ first, and the remaining vertices. Observe, that
each vertex of $V_{triple}$ is of degree exactly $d+4$ in $H$,
since it has exactly $d$ neighbors in $X_{code}$, exactly one
neighbor in $Y_{choice}$ and exactly three neighbors in
$V_{univ}$. After removing $V_{triple}$, all vertices have
degree at most $1$. The claim follows. \qed
\end{proof}

\begin{lemma}
\label{lem:ids}%
Let $k:=dt + \binom{n}{3} + n/3$. Then $(H,k)$ is a
YES-instance of \textsc{IDS} iff there exists $0 \le j < T$,
such that $(\cF_{j},U)$ is a YES-instance.
\end{lemma}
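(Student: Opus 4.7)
The plan is to establish both directions of the biconditional.

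\textbf{Forward direction.} Suppose $(\cF_j,U)$ admits a collection $\cS\subseteq\cF_j$ partitioning $U$, so $|\cS|=n/3$. I would construct
\[
I := \{v_{j,S}:S\in\cS\}\cup\{y'_S:S\in\cS\}\cup\{y_S:S\notin\cS\}\cup\{x'_{j_\delta,\delta}:0\le\delta<d\}\cup\{x_{\gamma,\delta}:0\le\delta<d,\;\gamma\neq j_\delta\},
\]
of cardinality exactly $n/3+\binom{n}{3}+dt=k$. Independence follows from the fact that the only edges of $H$ incident to $\{v_{j,S}:S\in\cS\}$ go to $V_{univ}$, to $y_S$, or to $x_{j_\delta,\delta}$, none of which is in $I$; and the matching vertices obey their matching constraint. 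Domination reduces to verifying: every $v_u$ is covered by the unique $v_{j,S}$ with $u\in S$; every $v_{i,S}$ with $S\notin\cS$ is covered by $y_S\in I$; every $v_{i,S}$ with $S\in\cS$ and $i\neq j$ is covered by some $x_{i_\delta,\delta}\in I$, which exists because $i\neq j$ forces $i_\delta\neq j_\delta$ for at least one coordinate $\delta$; and the matching endpoints and $V_{univ}$ vertices are dominated by their own inclusions.

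\textbf{Backward direction.} Suppose $I$ is an IDS of $H$ of size at most $k$. Since $x'_{\gamma,\delta}$ has only neighbor $x_{\gamma,\delta}$, each edge of $X_{code}$ contributes at least one vertex to $I$, giving $|I\cap X_{code}|\ge dt$; symmetrically, $|I\cap Y_{choice}|\ge\binom{n}{3}$. Hence $I_{out}:=I\setminus(X_{code}\cup Y_{choice})$ satisfies $|I_{out}|\le n/3$. All neighbors of $V_{univ}$ lie in $V_{triple}$, so $I_{out}$ must dominate every $v_u$; as each $v_{i,S}\in V_{triple}$ covers at most three vertices of $V_{univ}$ (exactly three when $S\in\cF_i$, zero otherwise), the inequality $3|I_{out}|\ge n$ forces $|I_{out}|=n/3$, $I_{out}\subseteq V_{triple}$, every $v_{i_S,S}\in I_{out}$ satisfies $S\in\cF_{i_S}$, and the collection $\cS:=\{S:\exists i,\,v_{i,S}\in I\}$ partitions $U$ with $|\cS|=n/3$ and a uniquely determined $i_S$ for each $S$.

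\textbf{Extracting a common index.} The key remaining step---and the one I expect to be the main obstacle---is showing that $J:=\{i_S:S\in\cS\}$ is a singleton, from which the lemma follows by taking $j$ to be its unique element. I would argue by contradiction: if $i_{S_1}\neq i_{S_2}$ for some $S_1,S_2\in\cS$, examine the vertex $v_{i_{S_2},S_1}\in V_{triple}\setminus I$. Independence with $v_{i_{S_1},S_1}\in I$ forces $y_{S_1}\notin I$; independence with $v_{i_{S_2},S_2}\in I$ forces $x_{(i_{S_2})_\delta,\delta}\notin I$ for every $\delta$; and $I\cap V_{univ}=\emptyset$ by the preceding paragraph. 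Thus none of the neighbors of $v_{i_{S_2},S_1}$ lies in $I$, contradicting the dominating property. The counting portion of the backward direction and the case analysis in the forward direction are routine, but this orphan-vertex argument---which plays off the independence constraints of two distinct instance indices against a shared triple $S_1$---is the crucial ingredient that forces the recovered partition to come from a single instance.
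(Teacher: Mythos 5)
Your proof is correct and follows essentially the same approach as the paper's: the same explicit construction of the independent dominating set in the forward direction, the same tight counting forcing $|I\cap X_{code}|=dt$, $|I\cap Y_{choice}|=\binom{n}{3}$, and $n/3$ triple-vertices in the backward direction, and the same independence-based obstruction to mixing two instance indices. The only cosmetic difference is in the last step, where the paper derives a cardinality contradiction (all of $\{v_{i,S}: i\in\{i_1,i_2\},\, S\in\cS\}$ would have to lie in the solution, giving $2n/3>n/3$ vertices) while you directly exhibit the single undominated vertex $v_{i_{S_2},S_1}$; both arguments are valid.
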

\begin{proof}
For the if part, let us assume that for some $0 \le j < T$ the
instance $(\cF_{j},U)$ of \tESC is a YES-instance and let $\cS
\subseteq \cF_{j}$ be a solution, i.e. a collection of $n/3$
disjoint sets. Construct an independent dominating set $D$ of
exactly $k$ vertices as follows. Let $(j_0,\ldots,j_{d-1})$ be
the $t$-ary expansion of $j$. For $0 \le \gamma < t, 0 \le
\delta < d$ if $j_\delta = \gamma$, then add
$x'_{\gamma,\delta}$ to $D$ and otherwise add to $D$ the vertex
$x_{\gamma,\delta}$. For $S \in \binom{U}{3}$, add $y'_S$ to
$D$ if $S \in \cS$ and add $y_S$ to $D$ otherwise. Finally, add
to $D$ the $n/3$ vertices $v_{j,S}$ with $S \in \cS$. Clearly
$|D|=k$. Moreover $D$ is an independent set. In fact,
$H[X_{code} \cup Y_{choice}]$ is a matching and we have taken
exactly one endpoint of each one of its edges. Moreover for
each $v_{j,S} \in D$ by construction there is no edge between
$v_{j,S}$ and the remaining vertices of $D$. To prove that $D$
is a dominating set observe that all the vertices of $V_{univ}$
are dominated because $\cS$ is a solution for $(\cF_{j},U)$,
and all the vertices of $X_{code} \cup Y_{choice}$ are
dominated because $D$ contains exactly one endpoint of each
edge of the matching $H[X_{code} \cup Y_{choice}]$. Finally,
each vertex $v_{i,S}$ is dominated, since either $i\neq j$ and
then $v_{i,S}$ is dominated by $X_{choice} \cap D$ due to the
coordinate at which $i$ and $j$ differ, or $S \in \cS$ and then
$v_{j,S} \in D$ or $y_S \in D$.

For the only if part, let $D$ be a dominating set in $H$ of
size at most $k$. Observe that since each vertex
$x'_{\gamma,\delta}$ and $y'_S$ is of degree exactly one in
$H$, we have $|D \cap X_{code}| \ge dt$ and $|D \cap
Y_{choice}| \ge \binom{n}{3}$. Moreover, since no vertex has
more than three neighbors in $V_{univ}$ we have $|D \cap
(V_{univ} \cup V_{triple})| \ge n/3$. Therefore, all the three
mentioned inequalities are tight. Moreover $D$ contains exactly
$n/3$ vertices of $V_{triple}$, since if $D$ would contain a
vertex of $V_{univ}$ than $|D \cap (V_{univ} \cup V_{triple})|$
would be strictly greater than $n/3$. Define $\cS = \{S \in
\binom{U}{3}\, |\, \exists 0 \le i < T: v_{i,S} \in D\}$.
Observe, that since $|D \cap V_{triple}| = n/3$ we have $|\cS|
= n/3$. We want to show that there exists $0 \le j < T$, such
that $\cS \subseteq \cF_{j}$, which is enough to prove that
$(\cF_{j},U)$ is a YES-instance. Assume the contrary. Then
there exist two indices $0 \le i_1 < i_2 < T$, such that there
exist $S_1,S_2 \in \cS$ with $v_{i_1,S_1} \in D$ and
$v_{i_2,S_2} \in D$. Since $D$ is an independent set, it means
that no vertex of $V':=\{v_{i,S} : i\in \{i_1,i_2\}, S \in
\cS\}$ has a neighbor in $D \cap X_{code}$ nor in $D \cap
Y_{choice}$ and therefore $V' \subseteq D$. However $|V'| =
2|\cS| = 2n/3 > n/3$, a contradiction. \qed
\end{proof}

\begin{theorem}
Let $d \geq 4$. Then \textsc{IDS} has no kernel of size
$O(k^{d-4-\varepsilon})$ for any constant $\varepsilon>0$
unless \textnormal{coNP} $\subseteq$ \textnormal{NP/poly}.
\end{theorem}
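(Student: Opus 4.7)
The plan is to assemble the two preceding lemmas into a direct application of Lemma~\ref{Lemma: d-composition gives lower bounds}. Write $d'$ for the degeneracy parameter appearing in the theorem statement and set $d := d'-4 \geq 0$. Given $T = t^d$ instances $(\cF_0,U),\ldots,(\cF_{T-1},U)$ of \tESC on a common universe of size $n$, the construction of this section produces in polynomial time an \IDS instance $(H,k)$ with
\[
k \;=\; dt + \binom{n}{3} + n/3 \;\leq\; t \cdot n^{O(1)},
\]
which satisfies the parameter bound of Definition~\ref{Definition: d-Composition}.

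Lemma~\ref{lem:ids} provides exactly the OR-equivalence required of a weak $d$-composition, namely that $(H,k)$ is a YES-instance of \IDS iff some $(\cF_j,U)$ is a YES-instance of \tESC, while Lemma~\ref{lem:ids-deg} certifies that $H$ is $(d+4)$-degenerate, i.e., $d'$-degenerate. Combined, these say that we have produced a weak $d$-composition from \tESC to \IDS restricted to $d'$-degenerate graphs. Since \tESC is NP-complete (via the reduction from \TDM noted at the start of the section), Lemma~\ref{Lemma: d-composition gives lower bounds} then immediately rules out, under $\coNP \nsubseteq \NP/\poly$, any kernel of size $O(k^{d-\varepsilon}) = O(k^{d'-4-\varepsilon})$ for \IDS on $d'$-degenerate graphs. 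Renaming $d'$ back to $d$ yields the theorem as stated.

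The substantive work has already been carried out inside Lemmas~\ref{lem:ids-deg} and~\ref{lem:ids}, so the only obstacle at this concluding step is the parameter accounting. Specifically, one must verify that $k$ grows as $t \cdot \poly(n)$ rather than as $T \cdot \poly(n) = t^d \cdot \poly(n)$, since the latter would collapse the resulting lower bound (a weak $d$-composition requires $k \le t \cdot n^{O(1)}$, not $k \le T \cdot n^{O(1)}$). This is transparent from the closed form for $k$ above, in which the only dependence on $T$ enters through the single factor of $t$ in the $dt$ term, while both $\binom{n}{3}$ and $n/3$ are independent of $t$. Everything else is bookkeeping, and the translation $d \leftrightarrow d'-4$ is the reason the bound in the theorem is shifted by four relative to the exponent produced directly by Lemma~\ref{Lemma: d-composition gives lower bounds}.
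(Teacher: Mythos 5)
Your proposal is correct and follows exactly the paper's route: Lemmas~\ref{lem:ids-deg} and~\ref{lem:ids} yield a weak $d$-composition from \tESC to \IDS on $(d+4)$-degenerate graphs, and Lemma~\ref{Lemma: d-composition gives lower bounds} then gives the bound after the shift $d \leftrightarrow d'-4$. Your explicit check that $k = dt + \binom{n}{3} + n/3 \le t\cdot n^{O(1)}$ (rather than $T\cdot n^{O(1)}$) is a useful piece of bookkeeping that the paper leaves implicit, but it is not a different argument.
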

\begin{proof}
From Lemmas \ref{lem:ids-deg} and \ref{lem:ids}, there exists a
weak $d$-composition from \tESC to \IDS in $(d+4)$-degenerate
graphs for any $d\geq 1$. The claim thus follows from
Lemma~\ref{Lemma: d-composition gives lower bounds}.
\end{proof}


\section{Induced Matching}
\label{Section: Induced Matching}

In this section we show a kernelization lower bound for the
{\sc Induced Matching} (\IM) problem in $d$-degenerate graphs.
In \IM, the input is a graph $G$ and an integer $k$, and the
goal is to determine whether there exists a set of $k$ edges
$e_1,\ldots,e_k$ in $G$ such that there is no edge in $G$
connecting two endpoints of $e_i$ and $e_j$ for all $i \neq j
\in \{1,\ldots,k\}$. The main result of this section is given
by the following theorem.

\begin{theorem}
\label{Theorem: IM}%
Let $d \geq 3$. Then \textsc{IM} has no kernel of size
$O(k^{d-3-\varepsilon})$ for any constant $\varepsilon>0$
unless \textnormal{coNP} $\subseteq$ \textnormal{NP/poly}.
\end{theorem}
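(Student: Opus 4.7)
My plan is to apply Lemma~\ref{Lemma: d-composition gives lower bounds} by exhibiting, for every integer $d' \geq 1$, a weak $d'$-composition from \tESC (which is NP-complete by a direct reduction from \TDM) to \IM on $(d'+3)$-degenerate graphs. Setting $d := d'+3$ then gives exactly the claimed $O(k^{d-3-\varepsilon})$ lower bound. The overall template is the same one used for \IDSlong in Section~\ref{Section: Independent Dominating Set}: use the standard padding trick to assume a common universe $U$ of size $n \equiv 0 \pmod{3}$ and a common set of $T = t^{d'}$ instances $(\cF_0,U),\ldots,(\cF_{T-1},U)$, and then build a single \IM instance $(H,k)$ which is YES iff one of the $T$ inputs is.

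As in the \IDS proof the graph $H$ decomposes into an \emph{instance graph} $H_{inst}$, an \emph{enforcement graph} $H_{enf}$, and a set $E_{conn}$ of connecting edges. The enforcement graph carries most of the matching count: it contains an encoding pendant matching $\{x_{\gamma,\delta},x'_{\gamma,\delta}\}$ for $0 \le \gamma < t$ and $0 \le \delta < d'$, and a choice pendant matching whose role is analogous to that of $Y_{choice}$ in the \IDS construction. The instance graph contains one \emph{triple gadget} per pair $(i,S)$ with $S \in \cF_i$: a dedicated edge $\{a_{i,S},b_{i,S}\}$ whose endpoints are attached to the three universe vertices $v_u$ ($u \in S$), to the encoding vertex $x_{\gamma,\delta}$ exactly when $i_\delta = \gamma$ in the $t$-ary expansion of $i$, and to the choice gadget. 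The distribution of these adjacencies between $a_{i,S}$ and $b_{i,S}$ is chosen so that each triple-gadget vertex has exactly $d'+3$ neighbors; ordering all such vertices first in a degeneracy ordering then shows that $H$ is $(d'+3)$-degenerate, because after their removal what remains of $H$ is a disjoint union of pendant matchings and isolated vertices.

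The forward direction of correctness is the easier half: given a partition $\cS \subseteq \cF_j$ of $U$, one selects the $n/3$ triple-edges $\{a_{j,S^*},b_{j,S^*}\}$ for $S^* \in \cS$ together with precisely those pendant edges of $H_{enf}$ whose endpoints are not blocked by adjacency to a selected triple-edge, yielding an induced matching of the target size $k$. The backward direction is where I expect the main obstacle to lie: I will need to show that any size-$k$ induced matching (i)~includes triple-edges coming from a single instance $j$, and (ii)~these triple-edges cover every $u \in U$ exactly once. Point~(i) should follow from a sub-additive counting argument on the encoding matching, since each included triple-edge excludes $d'$ encoding pendants (one per coordinate $\delta$), while two triple-edges coming from different instances collectively exclude strictly more than $d'$, which would eat into the budget $k$; this is the core place where the ``$d'$'' in the weak $d'$-composition comes from. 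Point~(ii) will be enforced by the universe-vertex adjacencies together with the choice matching acting as a pendant-style ``at most one triple per $S$'' selector, mirroring the way $Y_{choice}$ enforces disjointness in the \IDS composition. Once these two structural facts are in place they together recover a valid \tESC solution to instance $j$, and the theorem then follows by invoking Lemma~\ref{Lemma: d-composition gives lower bounds}.
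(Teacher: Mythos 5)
Your enforcement layer and point~(i) are sound and essentially identical to the paper's mechanism (a $t$-ary encoding matching in which every matched instance gadget blocks one pendant per coordinate). However, the paper composes from \MC, not from \tESC, and this difference is not cosmetic: it is exactly what makes your point~(ii) fail. \IM is a pure packing problem, so unlike in the \IDS composition you cannot appeal to a domination constraint to force the selected triple gadgets to cover $U$. Neither mechanism you sketch for (ii) works. First, two triple gadgets that share a universe element $u$ are merely adjacent to the \emph{common, unmatched} vertex $v_u$; the induced-matching condition only forbids an edge between endpoints of two matching edges, and a shared non-matched neighbour creates no conflict. So nothing in your construction even enforces that the selected triples are pairwise disjoint, let alone that they cover $U$. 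Second, the natural repair --- giving each $v_u$ a pendant partner so that ``covering $u$'' is detectable --- points the wrong way: selecting a triple gadget gains one matching edge but forfeits the three pendant edges at its elements, so the maximum induced matching is achieved by selecting \emph{no} triples, and a NO-instance reaches any target value that a YES-instance does. The same objection applies to your choice matching: matching $a_{i,S}$ kills the pendant at $y_S$, which \emph{penalizes} using triples rather than forcing their use.

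This is precisely why the paper starts from \MC. There, matching an edge gadget $a_{i,e}$ to the colour-pair vertex $b_{\col(u),\col(v)}$ forfeits the two node pendants $\{a_u,b_u\}$ and $\{a_v,b_v\}$, but a $k$-clique packs $\binom{k}{2}$ such gains against only $k$ forfeited pendants; since $\binom{k}{2}-k>d>0$, covering all colour pairs is strictly profitable, and Lemma~\ref{Lemma: IM1} shows the value $\binom{k}{2}+n-k$ is attained exactly by multicoloured cliques. For \tESC the corresponding ledger is $n/3$ gains against $n$ forfeits, so there is no threshold at which an exact cover beats the trivial all-pendant matching. To salvage your plan you would have to redesign the instance gadget so that coverage is rewarded rather than charged --- which is the role the clique's density plays in the paper --- or simply switch the source problem to \MC as the paper does. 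As written, the backward direction of your Lemma analogous to Lemma~\ref{Lemma: IM3} cannot be completed.
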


For our kernel lower bound on \IM, we present a weak
$d$-composition from the \MC\ problem in which the input is a
graph $G:=(V,E)$ and a vertex-coloring $\col: V \to
\{1,\ldots,k\}$, and the goal is to determine whether there
exists a multicolored clique of size $k$ in $G$, that is,
whether there exists a set of pairwise adjacent vertices
$v_1,\ldots,v_k$ in $G$ with $\col(v_i) \neq \col(v_j)$ for all
$i \neq j \in \{1,\ldots,k\}$. It is well known that \MC is
\NP-hard~\cite{Fellows-et-al2009}.

Let $(G_i=(V_i,E_i),\col_i)$, $0\leq i <T:=t^d$, be the input
instances of \MC. By standard padding and vertex-renaming
arguments, we can assume that all graphs $G_i$ are defined over
the same vertex set $V$ of size $n$, and that each vertex $v
\in V$ is assigned the same color $\col(v) \in \{1,\ldots,k\}$
by all coloring functions $\col_i$'s (note that this can be
done even if the number of colors in each graph is different).
We can further assume that for each $\{u,v\} \in E_i$ we have
$\col(u) \neq \col(v)$, since all edges between vertices of the
same color can never appear in any multicolored clique.
Finally, we also assume that $\binom{k}{2} -k > d$, since
otherwise a weak $d$-composition can trivially be constructed
by solving each instance separately in polynomial time.

\begin{figure}[t]
\begin{center}
\includegraphics{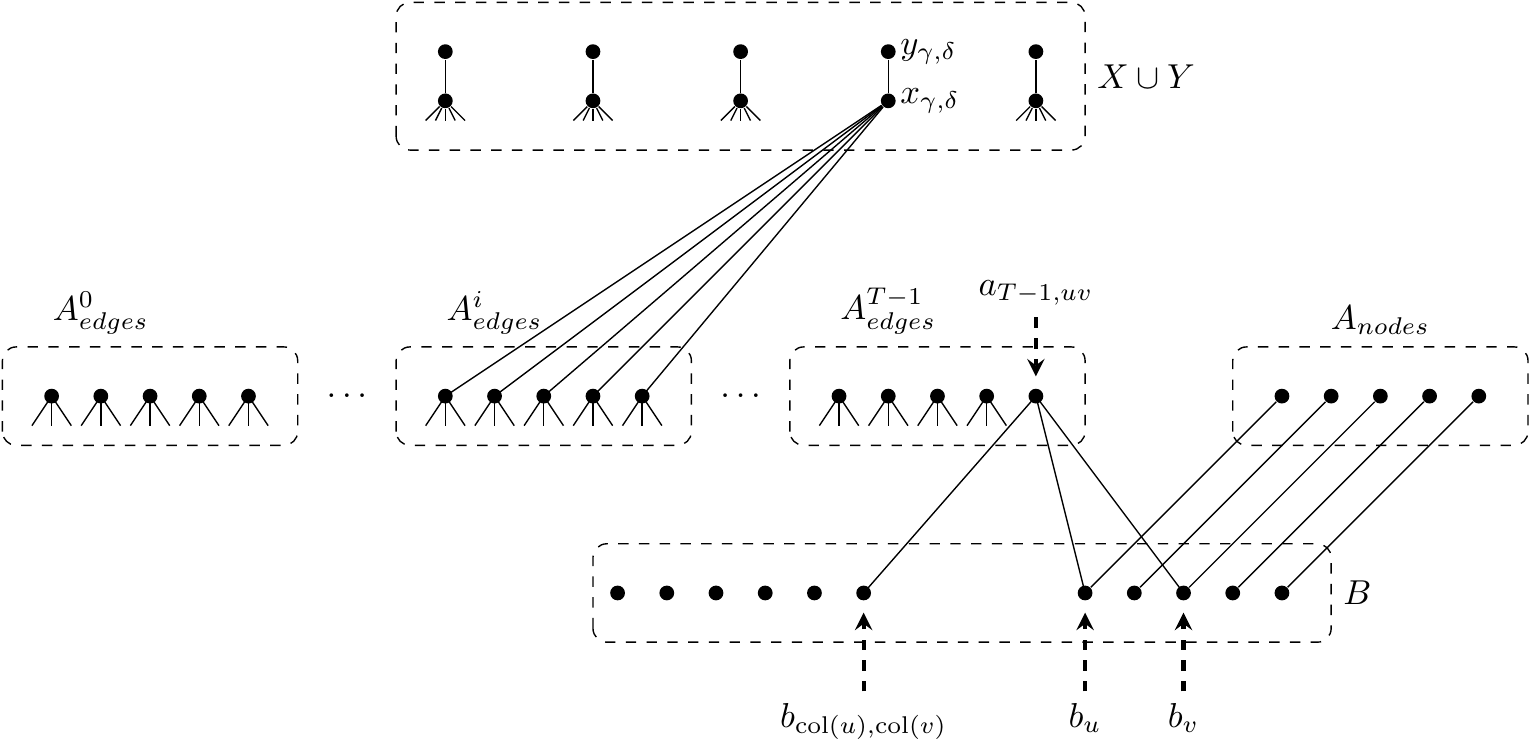}
\caption{The construction of the graph $H$ in the composition for \IMlong.}
\label{fig:im}
\end{center}
\end{figure}

We next construct and instance $(H=(W,F),k')$ of \IM for a
proper parameter $k'$. As in previous sections, $H$ consists of
an \emph{instance graph} $H_{inst}=(W_{inst},F_{inst})$ and an
\emph{enforcement gadget} $(H_{enf},E_{conn})$, where $H_{enf}$
is a graph and $E_{conn}$ is a proper set of edges between
$H_{inst}$ and $H_{enf}$ (see Fig.~\ref{fig:im}). The role of the instance graph is the
guarantee that feasible solutions to any \MC instance induce
feasible solutions to the \IM instance, and the enforcement
gadget ensures that we cannot combine partial solutions of \MC
instances to obtain a feasible solution to the \IM instance.

Graph $H_{inst}=(A\cup B,F_{inst})$ is bipartite, with
$A=A_{nodes}\cup A_{edges}$ and $B=B_{nodes}\cup
B_{col-pairs}$. Sets $A_{nodes}$ and $B_{nodes}$ contain a node
$a_v$ and $b_v$, respectively, for each $v\in V$. We have $A_{edges} = \bigcup_{0 \le i < T} A^i_{edges}$,
where the set $A^i_{edges}$ contains a node $a_{i,e}$ for each instance $i$ and
$e_i\in E_i$. Set $B_{col-pairs}$ contains a node
$b_{\alpha,\beta}\in B$ for every pair of colors $1\leq
\alpha<\beta\leq k$. Set $F_{inst}$ contains all the edges of
type $\{a_v,b_v\}$, plus edges between each
$a_{i,e}=a_{i,uv}$ and nodes $b_u$, $b_v$ and
$b_{\col(u),\col(v)}$. 

\begin{lemma}
\label{Lemma: IM1}%
Let $M'$ be an induced matching in $H_{inst}$. Then $|M'|\leq
\binom{k}{2} + n- k$. Moreover, equality holds iff the graph
with vertex set $V'=\big\{v \in V \,:\, b_v \in V(M')\big\}$
and edge set $E':=\big\{e \,:\, a_{i,e} \in V(M') \text{ for
some } i \in \{0,\ldots,T-1\}\big\}$ is a multicolored clique
of the graph $(V,\cup_i E_i)$.
\end{lemma}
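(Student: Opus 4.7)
The plan is to separate the edges of $M'$ according to the nature of their $B$-endpoint and exploit the induced-matching condition to block many $B_{nodes}$ vertices from being saturated. Let $m_3$ denote the number of edges of $M'$ whose $B$-endpoint lies in $B_{col-pairs}$; the remaining $|M'|-m_3$ edges have their $B$-endpoints in $B_{nodes}$. For each edge $(a_{i,uv},b_{\col(u),\col(v)}) \in M'$ of the first kind, the induced condition forces the other $H_{inst}$-neighbors $b_u$ and $b_v$ of $a_{i,uv}$ to lie outside $V(M')$. Since each color-pair vertex is saturated at most once, the $m_3$ color pairs recorded this way are distinct, so these edges select, for each chosen pair $(\alpha,\beta)$, a concrete instance-edge $e(\alpha,\beta) \in \bigcup_i E_i$ with endpoints of colors $\alpha$ and $\beta$. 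Letting $V_P \subseteq V$ be the union of endpoints of these $m_3$ selected edges, the blocking argument gives $\{v : b_v \in V(M')\} \subseteq V \setminus V_P$, so $|V(M') \cap B_{nodes}| \leq n - |V_P|$ and hence $|M'| \leq m_3 + (n - |V_P|)$.

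Next I would lower-bound the gap $|V_P| - m_3$ by a color-counting argument. If the vertices of $V_P$ realize $k'$ distinct colors then $|V_P| \geq k'$, and since the $m_3$ selected edges carry pairwise distinct color pairs drawn from those $k'$ colors, $m_3 \leq \binom{k'}{2}$. Consequently $|V_P| - m_3 \geq k' - \binom{k'}{2}$. The function $g(x) := x - \binom{x}{2} = x(3-x)/2$ is non-increasing on the integers for $x \geq 2$, with $g(0)=0$ and $g(1)=1$; combined with $k' \leq k$ and $k \geq 3$ this yields $k' - \binom{k'}{2} \geq k - \binom{k}{2}$. Substituting back produces the desired bound $|M'| \leq \binom{k}{2} + n - k$.

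For the equality clause, equality throughout forces $k' = k$, $m_3 = \binom{k}{2}$, $|V_P| = k$, and $|V(M') \cap B_{nodes}| = n - k$. These conditions jointly say that $V_P$ contains exactly one vertex of each color and that the $\binom{k}{2}$ selected edges realize every color pair, so $V_P$ together with the edges $e(\alpha,\beta)$ is a multicolored clique in $(V, \bigcup_i E_i)$. Furthermore, any edge of $M'$ of the form $(a_{i,uv},b_u)$ would, by the induced condition, exclude $b_{\col(u),\col(v)}$ from $V(M')$ and thereby remove the color pair $(\col(u),\col(v))$ from the chosen set, contradicting $m_3 = \binom{k}{2}$. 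Hence in the equality case no such edge can appear: every matched $b_v$ sits in an edge $(a_v,b_v)$, and the sets $V'$ and $E'$ of the lemma are precisely the vertex and edge sets of the claimed multicolored clique.

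The main obstacle I anticipate is the color-counting estimate: one must check that the joint constraints $|V_P| \geq k'$ and $m_3 \leq \binom{k'}{2}$ together force a gap of exactly $k - \binom{k}{2}$ and not some weaker quantity, which is where the monotonicity of $g$ together with the assumption $k \geq 3$ are essential (in particular to handle the small cases $k' \in \{0,1,2\}$). Edges of $M'$ with an $A$-endpoint in $A_{edges}$ but a $B$-endpoint in $B_{nodes}$ need no separate accounting in the inequality itself, since they only further decrease $|V(M') \cap B_{nodes}|$ below $n - |V_P|$; they must however be ruled out explicitly in the equality analysis, by the blocking argument above.
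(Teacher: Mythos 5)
Your main inequality is correct and is essentially the paper's argument made more explicit: the paper likewise isolates the matched color-pair vertices, uses the induced condition to block, for each matched $b_{\alpha,\beta}$, one vertex of color $\alpha$ and one of color $\beta$ from being saturated in $B_{nodes}$, and then charges the slack $\binom{k}{2}-k$ to the fact that each vertex lies in few color pairs. Your bookkeeping via the set $V_P$ of distinct blocked vertices, the number $k'$ of colors it realizes, and the monotonicity of $g(x)=x-\binom{x}{2}$ is a cleaner (and arguably tighter) rendering of the same count; the standing assumption $\binom{k}{2}-k>d>0$ gives $k\geq 4$, so your hypothesis $k\geq 3$ is available. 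The extraction of a clique from the equality conditions, including the argument ruling out matching edges of the form $\{a_{i,uv},b_u\}$, is also sound.

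Two points need attention. First, you prove only the direction ``equality implies clique''. The other direction of the stated equivalence is the one actually invoked in the forward implication of Lemma~\ref{Lemma: IM3}: given a multicolored clique $K$ in some $G_i$ one must exhibit an induced matching of size exactly $\binom{k}{2}+n-k$, namely $\{\{a_{i,e},b_{\col(u),\col(v)}\}: e=uv\in E(K)\}\cup\{\{a_v,b_v\}: v\notin K\}$, and verify that it is induced; the paper's augmentation remark (that $M'\cup\{\{a_v,b_v\}\}$ stays induced whenever $v$ meets no edge of $E'$) is its version of this step, and your write-up should include it (note also that, for a non-extremal $M'$, the literal ``clique implies equality'' direction fails, so the equivalence is really about attainability of the maximum). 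Second, your clique lives on $V_P=\{v: b_v\notin V(M')\}$, which is the \emph{complement} of the lemma's literal $V'=\{v: b_v\in V(M')\}$; the paper's own proof (which concludes $|V\setminus V'|=n-k$, i.e.\ $|V'|=k$) shows that the intended $V'$ is the set of blocked vertices, so your closing sentence identifying the printed $V'$ with the clique's vertex set is off by complementation and should be stated as $V\setminus V'$ (equivalently, the statement's ``$b_v\in V(M')$'' should read ``$b_v\notin V(M')$'').
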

\begin{proof}
Let $M'$ be a maximum induced matching in $H_{inst}$. Clearly
$n \leq |M'| \leq n+\binom{k}{2}$ (the lower bound comes from
the matching induced by pairs $(a_v,b_v)$ ). If $|M'| = n$ then
we are done, so assume $|M'| > n$. Since $\binom{k}{2} - k > d
>0$, we have $\binom{k}{2} + n- k > n$, and so some vertex
$b_{\alpha,\beta}$ must be matched in $M'$. Consider some edge
$m \in M'$ which includes some vertex $b_{\alpha,\beta}$. Then
by construction, $m=\{b_{\alpha,\beta},a_{i,e}\}$ where
$a_{i,e}$ corresponds to an edge $e=\{u,v\} \in E_i$, for some
$i \in \{0,\ldots,T-1\}$, where $\col(u)=\alpha$ and
$\col(v)=\beta$. As $\{b_u,a_{i,e}\}, \{b_{v},a_{i,e}\} \in
F_{inst}$, the vertices $b_u$ and $b_v$ cannot be matched in
$M'$. Thus, for each vertex $b_{\alpha,\beta}$ that is matched
in $M'$, a vertex of $\{b_u: u \in V, \col(u) = \alpha\}$ is
not matched in $M'$, and a vertex of $\{b_v: v \in V, \col(v) =
\beta\}$ is not matched in $M'$. Since a vertex can only appear
in at most $k$ color pairs, we have $M' \leq \binom{k}{2} + n-
k$. Furthermore, if a vertex $v \in V$ is not contained in any
edge of $E'$, then $M' \cup \{a_v,b_v\}$ is also an induced
matching. It is now not difficult to see that $|M'| =
\binom{k}{2} + n- k$ iff $|V \setminus V'| = n-k$, and
$(V',E')$ is a multicolored clique of size $k$. \qed
\end{proof}

As Lemma~\ref{Lemma: IM1} suggests, to ensure that maximum
induced matchings in $H_{inst}$ are meaningful to us, we need
to guarantee that only nodes $c_{i,e}$ for a unique index $i$
are matched. To this aim, we introduce the following
enforcement gadget $(H_{enf},E_{conn})$. Graph $H_{enf}=(X\cup
Y,E_{enf})$ is bipartite. Sets $X$ and $Y$ contain a node
$x_{\gamma,\delta}$ and $y_{\gamma,\delta}$, respectively, for
all integers $0\leq \gamma <t$ and $0\leq \delta <d$. We add
edges between all pairs
$\{x_{\gamma,\delta},y_{\gamma,\delta}\}$. Finally, we connect
$x_{\gamma,\delta}$ with $a_{i,e}$ iff $i_{\delta}=\gamma$
where $(i_0,i_1,\ldots,i_{d-1})$ is the $t$-ary expansion of
index $i$. Observe that each $a_{i,e}$ is adjacent to exactly
$d$ distinct vertices  $x_{\gamma,\delta}$.

\begin{lemma}
\label{Lemma: IM2}%
$H$ is $(d+3)$-degenerate.
\end{lemma}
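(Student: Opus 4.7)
The plan is to exhibit a linear vertex ordering such that every vertex has at most $d+3$ neighbors later in the ordering, which characterizes $(d+3)$-degeneracy. The key observation is that vertices in $A_{edges}$ are the only ones whose global degree in $H$ is not obviously small: each $a_{i,e}$ with $e=\{u,v\}$ is incident to exactly three vertices in $B$, namely $b_u$, $b_v$, and $b_{\col(u),\col(v)}$, and to exactly $d$ vertices in $X$, namely $x_{i_0,0},x_{i_1,1},\ldots,x_{i_{d-1},d-1}$ where $(i_0,\ldots,i_{d-1})$ is the $t$-ary expansion of $i$. Thus each vertex of $A_{edges}$ has degree exactly $d+3$ in $H$.

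Therefore I would first place all vertices of $A_{edges}$ at the beginning of the ordering, in any order among themselves; each of them contributes at most $d+3$ later neighbors regardless of how the remainder is arranged, since that is already its total degree in $H$.

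Next I would argue that after deleting $A_{edges}$ from $H$, the remaining graph has maximum degree at most $1$. Indeed, once $A_{edges}$ is removed, the only surviving edges are the matching edges $\{a_v,b_v\}$ inside $H_{inst}$ (one per vertex $v\in V$) and the matching edges $\{x_{\gamma,\delta},y_{\gamma,\delta}\}$ inside $H_{enf}$; the vertices of $B_{col-pairs}$ become isolated. Consequently every vertex of $A_{nodes}\cup B_{nodes}\cup B_{col-pairs}\cup X\cup Y$ has at most one neighbor in the residual graph, so any ordering of these vertices (placed after $A_{edges}$) yields at most $1\le d+3$ later neighbors per vertex.

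Combining the two parts produces an ordering in which every vertex has at most $d+3$ later neighbors, proving that $H$ is $(d+3)$-degenerate. No subtle step is needed; the only thing to be careful about is the exact degree count of $a_{i,e}$, which relies on the fact that the $t$-ary expansion of $i$ contributes exactly one $x_{\gamma,\delta}$ per coordinate $\delta\in\{0,\ldots,d-1\}$.
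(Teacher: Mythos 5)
Your proof is correct and takes essentially the same approach as the paper: an explicit vertex ordering whose validity rests on the observation that each $a_{i,e}$ has degree exactly $d+3$ and that deleting $A_{edges}$ leaves a graph of maximum degree $1$. The paper's ordering differs cosmetically (it places $A_{nodes}$ and $Y$ first, then $A_{edges}$, then $B$, then $X$), but the degree count doing the work is identical, and your version spells out the details the paper leaves to the reader.
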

\begin{proof}
Consider any ordering of the vertices which places first nodes
$A_{nodes}$ and $Y$, then node $A_{edges}$, then nodes $B$, and
finally nodes $X$. It is not difficult to check that each
vertex is adjacent to at most $d+3$ vertices appearing to its
right in this ordering. \qed
\end{proof}

\begin{lemma}
\label{Lemma: IM3}%
Let $k':=t(d-1)+\binom{k}{2}+n-k$. Then $(H,k')$ is a
YES-instance of \textsc{IM} iff $(G_i,\col_i)$ is a
YES-instance of \MC for some index $i \in \{0,\ldots,T-1\}$.
\end{lemma}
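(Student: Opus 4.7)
The lemma is an equivalence and I would treat the two directions separately; the forward direction is a direct construction, while the reverse requires a tight accounting argument.

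\textbf{Sufficiency.} Suppose $G_{i^{\star}}$ contains a multicolored $k$-clique $V^{\star}=\{v_1,\ldots,v_k\}$ with $\col(v_\alpha)=\alpha$. The plan is to exhibit an induced matching $M$ of size $k'$ built from three disjoint pieces. First, take the ``filler'' edges $\{a_v,b_v\}$ for each $v\notin V^{\star}$, contributing $n-k$ edges. Second, for every clique edge $e=\{u,v\}$ take $\{a_{i^{\star},e},b_{\col(u),\col(v)}\}$, contributing $\binom{k}{2}$ edges (one per color pair, since the clique is multicolored). Third, take all enforcement matching edges $\{x_{\gamma,\delta},y_{\gamma,\delta}\}$ with $\gamma\neq (i^{\star})_\delta$. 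The non-trivial inducedness check is between the second and third pieces: the $x$-neighbors of every $a_{i^{\star},e}$ are exactly $\{x_{(i^{\star})_\delta,\delta}:0\le\delta<d\}$, which are precisely the $x$-vertices deliberately omitted in the third piece. All other potential conflicts are ruled out because each color pair appears exactly once in the clique and each universe vertex appears in at most one matched edge.

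\textbf{Necessity.} Let $M$ be an induced matching of size $\ge k'$. Decompose $M=M_{\mathrm{inst}}\cup M_{\mathrm{enf}}\cup M_{\mathrm{conn}}$ according to whether each edge sits in $H_{\mathrm{inst}}$, is a matching edge $\{x_{\gamma,\delta},y_{\gamma,\delta}\}$ of $H_{\mathrm{enf}}$, or belongs to $E_{\mathrm{conn}}$. Lemma~\ref{Lemma: IM1} gives $|M_{\mathrm{inst}}|\le\binom{k}{2}+n-k$. For the enforcement contribution, edges of $M_{\mathrm{enf}}\cup M_{\mathrm{conn}}$ use pairwise-distinct $x$-vertices, so $|M_{\mathrm{enf}}|+|M_{\mathrm{conn}}|\le td$. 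Let $I:=\{i:\exists e,\,a_{i,e}\in V(M)\}$; for each $i\in I$ and each $\delta<d$, inducedness forbids $x_{(i)_\delta,\delta}$ from appearing in any matching edge other than one to some $a_{i,e}$ itself (which does not increase the count). I would then compare $|M|\ge k'$ against the sum of the two bounds above, minus the $x$-vertices lost, to conclude that every inequality must be tight; in particular $|I|=1$, say $I=\{i^{\star}\}$, and $|M_{\mathrm{inst}}|=\binom{k}{2}+n-k$. The equality clause of Lemma~\ref{Lemma: IM1} then extracts from $M_{\mathrm{inst}}\cap A^{i^{\star}}_{\mathrm{edges}}$ a multicolored $k$-clique of $G_{i^{\star}}$.

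\textbf{Main obstacle.} The delicate step is the arithmetic forcing in the necessity direction. One must simultaneously (a) rule out ``mixed'' matchings where $I$ contains two or more instances, using that any two distinct $i,i'\in I$ disagree on some coordinate $\delta$ so that the set of forbidden $x$-vertices grows strictly with $|I|$, and (b) account for the trade-off of $M_{\mathrm{conn}}$ edges, which free an $x$-vertex for matching but consume an $a_{i,e}$ that in turn forbids $d-1$ further $x$-vertices. Pinning down both effects simultaneously is what forces the extremal configuration ``one instance, full clique, all other $x$-vertices matched to their $y$-partner''; once the configuration is identified, the clique extraction is immediate from Lemma~\ref{Lemma: IM1}.
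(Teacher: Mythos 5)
Your sufficiency direction is exactly the paper's construction and is fine. The problem is in the necessity direction, specifically in how you handle matching edges from $E_{\mathrm{conn}}$. You split $M$ into $M_{\mathrm{inst}}\cup M_{\mathrm{enf}}\cup M_{\mathrm{conn}}$ and propose to ``compare $|M|\ge k'$ against the sum of the two bounds, minus the $x$-vertices lost'' --- but the bounds you actually write down do not close. Even in the best case $|I|=\{i^\star\}$, the $d$ vertices $x_{(i^\star)_\delta,\delta}$ are only forbidden from $M_{\mathrm{enf}}$: one of them may still be saturated by a connector edge $\{a_{i^\star,e},x_{(i^\star)_\delta,\delta}\}$, so your count of saturated $x$-vertices is only bounded by $td-d+1$, giving $|M|\le \binom{k}{2}+n-k+(td-d+1)=k'+1$. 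This leaves room for a size-$k'$ matching with $|M_{\mathrm{inst}}|=\binom{k}{2}+n-k-1$, in which case the equality clause of Lemma~\ref{Lemma: IM1} does not apply and no clique is extracted. Recovering that lost unit requires showing that a connector edge costs something on the $M_{\mathrm{inst}}$ side (its $a_{i,e}$ endpoint blocks three $B$-vertices), which is precisely the ``trade-off'' you flag as the main obstacle and never resolve; Lemma~\ref{Lemma: IM1} bounds induced matchings of $H_{\mathrm{inst}}$ in isolation and does not see those blocked vertices.

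The paper avoids all of this with a one-line exchange argument that is the missing idea here: any matching edge $\{a_{i,e},x_{\gamma,\delta}\}\in E_{\mathrm{conn}}$ may be replaced by $\{x_{\gamma,\delta},y_{\gamma,\delta}\}$ without affecting size or inducedness, because $y_{\gamma,\delta}$ has degree one and $x_{\gamma,\delta}$ was already saturated. After this normalization $M_{\mathrm{conn}}=\emptyset$ and the accounting is immediate: some $a_{i,e}$ must be saturated (else $|M|\le n+td<k'$), so all $d$ vertices $x_{i_\delta,\delta}$ are unsaturated and $|M_{\mathrm{enf}}|\le d(t-1)$, forcing $|M_{\mathrm{inst}}|=\binom{k}{2}+n-k$ exactly; two distinct indices with saturated $a$-vertices would forbid at least $d+1$ vertices of $X$ and push $|M|$ below $k'$, so the index is unique and Lemma~\ref{Lemma: IM1} finishes. (A shared quibble: both you and the paper count the surviving enforcement edges as $t(d-1)$, but each of the $d$ coordinates contributes $t-1$ surviving pairs, so the count is $d(t-1)$; this typo propagates into the stated $k'$ but not into the asymptotic conclusion.)
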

\begin{proof}
Suppose some $G_i$ has a multicolored clique of size $k$. Then
by Lemma~\ref{Lemma: IM1} we can find an induced matching $M'$
of size $\binom{k}{2}+n-k$ in $H_{inst}$ which matches only
vertices of type $a_{i,e}$. Let us add all the edges
$\{x_{\gamma,\delta},y_{\gamma,\delta}\}$ such that $i_\delta
\neq \gamma$. There are precisely $t(d-1)$ such edges, which
together with the edges of $M'$, form an induced matching of
size $k'$ in $H$.

For the converse direction, suppose $M$ is an induced matching
of size $k'$ in $H$. First observe that we can assume $M$ does
not contain any edges of $F(A_{edge},X)$ since any such edge
$\{a_{i,e},x_{\gamma,\delta}\}$ can be safely replaced with the
edge $\{x_{\gamma,\delta},y_{\gamma,\delta}\}$. Now as
$\binom{k}{2} - k > d$ w.l.o.g., the matching $M$ must include
some edge of $F(B,A_{edge})$ since there are $n+td < k'$ edges
altogether in $F(A_{node},B)\cup F(X,Y)$. So let $a_{i,e}$ be a
vertex of $A_{edge}$ which is matched in $M$. Then as $a_{i,e}$
is matched in $M$, this means that $d$ vertices of $X$ cannot
be matched in $M$, precisely those vertices $x_{\gamma,\delta}$
with $i_\delta = \gamma$. Thus, $|F(X,Y) \cap M| \leq t(d-1)$.
Since $M \setminus F(X,Y)$ is a subset of edges in $H_{inst}$,
and since the maximum induced matching in $H_{inst}$ is at most
$\binom{k}{2} + n - k$ by Lemma~\ref{Lemma: IM1}, this implies
that $M$ contains exactly $\binom{k}{2} + n - k$ edges of
$H_{inst}$ and $t(d-1)$ edges of $F(X,Y)$. By construction of
$F(A_{edge},X)$, the latter assertion implies that there exists
some $i$ such that each vertex of $A_{edge}$ that is matched by
$M$ is of the form $a_{i,e}$ for some $e \in E_i$. By
Lemma~\ref{Lemma: IM1}, the second assertion implies that $\{v
\in V : \{b_v,a_{i,e}\} \in M \text{ for some } e \in E_i\}$ is
a multicolored clique of size $k$ in~$G_i$. \qed
\end{proof}

Theorem~\ref{Theorem: IM} now directly follows from
Lemmas~\ref{Lemma: IM2}, \ref{Lemma: IM3}, and~\ref{Lemma:
d-composition gives lower bounds}.


\section{Upper Bounds}

Both upper bounds that we present rely on the following easy lemma.

\begin{lemma}
\label{Lemma: Easy}%
Let $G:=(A \cup B,E)$ be a bipartite $d$-degenerate graph where
all vertices in $B$ have degree greater than $d$. Then $|B|
\leq d|A|$.
\end{lemma}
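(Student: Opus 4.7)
The plan is to prove the lemma by a simple double counting of the edges of $G$.

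First I would upper bound $|E|$ using only the degeneracy assumption. Since $G$ is $d$-degenerate, its vertices admit an ordering in which every vertex has at most $d$ neighbors appearing later in the ordering. Summing this bound over all vertices gives $|E| \leq d(|A|+|B|)$, because each edge is counted exactly once (by its earlier endpoint in the ordering).

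Next I would lower bound $|E|$ using the bipartiteness together with the degree hypothesis on $B$. Since every edge of $G$ has exactly one endpoint in $B$, we have $|E| = \sum_{b \in B} \deg(b)$. By assumption each $b \in B$ satisfies $\deg(b) \geq d+1$, so $|E| \geq (d+1)|B|$.

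Combining the two inequalities yields $(d+1)|B| \leq d|A| + d|B|$, which rearranges to $|B| \leq d|A|$, as desired. There is no real obstacle here: the only point to be careful about is to invoke bipartiteness explicitly so that the sum of degrees over $B$ equals the total number of edges (rather than twice that number), which is what makes the two bounds compatible.
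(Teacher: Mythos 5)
Your proof is correct and follows essentially the same argument as the paper: bound $|E|$ above by $d(|A|+|B|)$ via degeneracy, below by $(d+1)|B|$ via the degree hypothesis on $B$, and combine. The explicit remark about bipartiteness ensuring the degree sum over $B$ counts each edge once is a nice touch of care, but the route is identical.
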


\begin{proof}
By the $d$-degeneracy of $G$, we know that $|E| \leq
d(|A|+|B|)$. Since each vertex of $B$ has degree greater than
$d$, we also know that $(d+1)|B| \leq |E|$. Subtracting $d|B|$
from both inequalities gives $|B| \leq d|A|$. \qed
\end{proof}

\subsection{Dominating and independent dominating set}

\subsection{Connected and capacitated vertex cover}

Let us begin with the kernel for \ConVC. In \ConVC, our goal is
to determine whether a given graph $G$ has a set of vertices
$A$ such that $G[V(G)\setminus A]$ is edgeless ($A$ is a
\emph{vertex cover}) and $G[A]$ is connected ($A$ is
\emph{connected}). Our kernelization algorithm uses two simple
reduction rules which are given below, the second of which is a
variant of the well-known \emph{crown reduction
rule}~\cite{Fellows2003}. We say that a set of vertices $S
\subseteq V(G)$ is a set of \emph{twins} in $G$ if $N(v)=N(u)$
for all $u,v \in S$ (note that this implies that $S$ is an
independent set in $G$). We let $N(S)$ denote the common set of
neighbors of $S$.

\begin{reduction-rule}
\label{Rule: Isolated}%
If $G$ has an isolated vertex remove it.
\end{reduction-rule}

\begin{reduction-rule}
\label{Rule: ConVC}%
If $S \subseteq V(G)$ is a subset of at most $d+1$ twin
vertices with $|N(S)|< |S|$, remove an arbitrary vertex of $S$
from~$G$.
\end{reduction-rule}

\begin{lemma}
\label{Lemma: ConVC}%
Let $G$ be a graph, and let $G'$ be a graph resulting from
applying either Rule~\ref{Rule: Isolated} or Rule~\ref{Rule:
ConVC} to $G$. Then for any integer $k$, the graph $G$ has a
connected vertex cover of size $k$ iff $G'$ has a connected
vertex cover of size $k$.
\end{lemma}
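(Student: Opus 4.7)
The plan is to verify each rule in both directions via explicit swap arguments that exploit the twin structure of $S$; throughout, ``of size $k$'' is read as ``of size at most $k$'', and by a CVC we mean a connected vertex cover. Rule~\ref{Rule: Isolated} is immediate: an isolated vertex $v$ covers no edge, and if it appeared in a CVC $C$ with $|C|\ge 2$ then $\{v\}$ would be an isolated component of $G[C]$; so every CVC of $G$ is a CVC of $G'=G-v$ and vice versa (the boundary case $|C|\le 1$ is trivial).

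For the forward direction of Rule~\ref{Rule: ConVC}, let $C$ be a CVC of $G$ of size at most $k$ and let $v$ be the vertex removed by the rule. If $v\notin C$, keep $C$. If $v\in C$ and some twin $u\in S\setminus(C\cup\{v\})$ survives in $G'$, swap $v$ for $u$: since $N(u)=N(v)=N(S)$ and $u,v$ are non-adjacent, $(C\setminus\{v\})\cup\{u\}$ is still a CVC of the same size and avoids $v$. Finally, if $v\in C$ and $S\subseteq C$, then $C\setminus\{v\}$ works in $G'$: no edge of $G'$ is incident to $v$, and any path through $v$ in $G[C]$ reroutes through any $u\in(S\setminus\{v\})\cap C$, whose neighborhood coincides with that of $v$.

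For the backward direction, let $C'$ be a CVC of $G'$ of size at most $k$. If $N(S)\subseteq C'$, then $C'$ already covers the edges from $v$ to $N(S)$ that reappear in $G$, and $G[C']=G'[C']$ remains connected. Otherwise, pick $w\in N(S)\setminus C'$; covering every edge $\{u,w\}$ with $u\in S\setminus\{v\}$ forces $S\setminus\{v\}\subseteq C'$. Using $|S\setminus\{v\}|\ge|N(S)|$ (the rule's hypothesis $|N(S)|<|S|$), swap any such $u$ for $v$: set $C''=(C'\setminus\{u\})\cup\{v\}$, of the same size as $C'$. Coverage of $G$ by $C''$ follows from the twin identity, but the main obstacle is connectivity of $G[C'']$. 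For this, we must observe that $N(S)\cap C'\neq\emptyset$: otherwise the vertices of $S\setminus\{v\}\subseteq C'$ would have no neighbor in $C'$ at all, contradicting connectivity of $G'[C']$. This is the precise point where $|N(S)|<|S|$ and the connectivity of $C'$ jointly enter. With $N(S)\cap C'$ nonempty, $v$ has a neighbor in $C''$, and the standard twin-rerouting argument preserves connectivity of $G[C'']$, completing the proof.
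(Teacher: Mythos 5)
Your handling of Rule~\ref{Rule: Isolated} and of the forward direction of Rule~\ref{Rule: ConVC} is essentially correct and follows the same normalization idea as the paper (in your second case one should note that the existence of a twin $u\in S\setminus C$ already forces $N(S)\subseteq C$, which is why the swap preserves coverage of $G$; this is automatic but worth stating). The backward direction, however, has a genuine error. In the case $N(S)\not\subseteq C'$ you pick $w\in N(S)\setminus C'$, correctly deduce $S\setminus\{v\}\subseteq C'$, and then set $C''=(C'\setminus\{u\})\cup\{v\}$ for some $u\in S\setminus\{v\}$. This $C''$ is not a vertex cover of $G$ (nor of $G'$): the edge $\{u,w\}$ is present in both graphs, and both of its endpoints lie outside $C''$, since $u$ was removed and $w\notin C'\cup\{v\}$. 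The ``twin identity'' $N(u)=N(v)$ does not rescue this: putting $v$ into the cover covers the edges incident to $v$, not the distinct edges incident to $u$. So coverage fails in exactly the case your argument is designed to handle, and the connectivity discussion that follows is moot.

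The repair is not a one-for-one swap between twins but a swap between $S$ and $N(S)$, which is what the paper's proof does implicitly by first arguing that a minimum connected vertex cover can be assumed to contain all of $N(S)$ and at most one vertex of $S$. Concretely: keep one vertex $u_0\in S\setminus\{v\}$, delete $S\setminus\{v,u_0\}$ from $C'$, and add $N(S)\setminus C'$. Coverage is then immediate, since $N(S)\subseteq C''$ meets every edge incident to $S$, including those at $v$, and every other edge keeps its $C'$-endpoint. Connectivity holds because $u_0$ is adjacent to all of $N(S)$ and any path through a deleted twin reroutes through $u_0$. The size does not increase because $|N(S)\setminus C'|\le |N(S)|-1\le |S|-2=|S\setminus\{v,u_0\}|$, where the first inequality uses precisely your (correct and useful) observation that $N(S)\cap C'\neq\emptyset$ by connectivity of $G'[C']$. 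You identified the right structural fact but used it only for connectivity; it is also what makes the cardinality count close, and without the $S\to N(S)$ exchange the argument cannot be completed.
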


\begin{proof}
The lemma is obvious for Rule~\ref{Rule: Isolated}, so let us
focus on Rule~\ref{Rule: ConVC}. Observe that there exists a
minimal size connected vertex cover $A$ in $G$ with $N(S)
\subseteq A$. Furthermore, $A$ contains at most one vertex of
$S$, and this vertex can be replaced by any other vertex of $S$
to obtain another connected vertex cover for $G$ of equal size.
Replacing this vertex (if it exists) with a vertex of $S \cap
V(G')$, we obtain a connected vertex cover $A'$ for $G'$ with
$|A'|=|A|$. Conversely, using similar arguments one can
transform a minimum size connected vertex cover for $G'$ to an
equal size connected vertex cover for $G$. \qed
\end{proof}

\begin{theorem}
\label{Theorem: ConVC}%
\ConVC\ in $d$-degenerate graphs has a kernel of size $O(k^d)$.
\end{theorem}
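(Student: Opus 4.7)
\medskip

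\noindent\textbf{Proof plan for Theorem~\ref{Theorem: ConVC}.} The plan is to apply Rules~\ref{Rule: Isolated} and~\ref{Rule: ConVC} exhaustively, and then bound the resulting graph in terms of an approximate vertex cover by splitting the complement into a ``high-degree'' part, handled by Lemma~\ref{Lemma: Easy}, and a ``low-degree'' part, handled by a neighborhood-counting argument that uses Rule~\ref{Rule: ConVC}.

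First, as a preprocessing step, I compute a $2$-approximate vertex cover $C$ of $G$ (for instance via a maximal matching). If $|C|>2k$, then $G$ has no vertex cover of size $k$, and in particular no connected vertex cover of size $k$, so I output a trivial \textbf{NO}-instance. Otherwise $|C|\le 2k$, and I exhaustively apply Rules~\ref{Rule: Isolated} and~\ref{Rule: ConVC}; by Lemma~\ref{Lemma: ConVC} this preserves the answer. Let $G=(V,E)$ denote the reduced graph; since the rules only delete vertices, $C\cap V$ is still a vertex cover of size at most $2k$, and $B:=V\setminus C$ is an independent set.

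Next I split $B$ into $B_{\mathrm{high}}=\{v\in B:\deg_G(v)>d\}$ and $B_{\mathrm{low}}=\{v\in B:\deg_G(v)\le d\}$. Since $B_{\mathrm{high}}\cup C$ induces a bipartite $d$-degenerate subgraph (all non-$C$ vertices have degree greater than $d$ in it), Lemma~\ref{Lemma: Easy} gives $|B_{\mathrm{high}}|\le d|C|\le 2dk$. For $B_{\mathrm{low}}$, each vertex has a neighborhood contained in $C$ of size at most $d$, so its neighborhood lies in the family $\mathcal{N}=\{N\subseteq C:|N|\le d\}$, of size $|\mathcal{N}|=O(k^d)$. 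The key step is to argue, using Rule~\ref{Rule: ConVC}, that for every $N\in\mathcal{N}$ the set $S_N$ of vertices of $B_{\mathrm{low}}$ with neighborhood exactly $N$ satisfies $|S_N|\le |N|\le d$: the vertices of $S_N$ are pairwise twins with common neighborhood of size $|N|\le d$, so if $|S_N|>|N|$ I can pick a subset $S\subseteq S_N$ of size $\min(|S_N|,d+1)\le d+1$ with $|N(S)|=|N|<|S|$, contradicting exhaustiveness of Rule~\ref{Rule: ConVC}. Consequently $|B_{\mathrm{low}}|\le d\,|\mathcal{N}|=O(k^d)$.

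Putting the three bounds together yields $|V|\le |C|+|B_{\mathrm{high}}|+|B_{\mathrm{low}}|=O(k^d)$, and since $G$ is $d$-degenerate we have $|E|\le d|V|=O(k^d)$, giving a kernel of total size $O(k^d)$. The only delicate point, and the one I would take most care with, is the twin-counting step: I need to verify that Rule~\ref{Rule: ConVC}'s cap of $|S|\le d+1$ is still enough to force $|S_N|\le|N|$ even when $|N|=d$, which is exactly the borderline case where picking $S$ of size $d+1$ yields $|N(S)|=d<d+1=|S|$ and the rule fires.
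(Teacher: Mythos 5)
Your proposal is correct and follows essentially the same route as the paper's proof: split the independent complement of a vertex cover into high-degree vertices (bounded via Lemma~\ref{Lemma: Easy}) and low-degree vertices (bounded by counting neighborhoods in the cover and using Rule~\ref{Rule: ConVC} to cap each twin class $S_N$ at $|N|\le d$ members). The only difference is cosmetic: you obtain an explicit $2$-approximate vertex cover of size at most $2k$ up front, whereas the paper argues relative to a hypothetical connected vertex cover of size $k$ and rejects if the reduced graph exceeds the resulting $O(k^d)$ bound; both yield the same kernel.
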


\begin{proof}
Our kernelization algorithm for \textsc{Connected Vertex Cover}
in $d$-degenerate graphs exhaustively applies Rule~\ref{Rule:
Isolated} and Rule~\ref{Rule: ConVC} until they no longer can
be applied. Since Rule~\ref{Rule: Isolated} can be implemented
in linear time, and Rule~\ref{Rule: ConVC} can be done in
$n^{O(d)}$ time, this algorithm runs in polynomial time. Let
$G'$ be the graph resulting from the kernelization. Observe
that both reduction rules that were used do not increase the
degeneracy of the graph, and so $G'$ is $d$-degenerate as well.
Furthermore, due to Lemma~\ref{Lemma: ConVC}, we know that $G$
has a connected vertex cover of size $k$ iff $G'$ has one as
well. We next show that $|V(G')| =O(k^d)$, or otherwise $G'$
has no connected vertex cover of size $k$.

Suppose that $G'$ has a connected vertex cover $A$ of size $k$.
Then as $A$ is a vertex cover, the set $B:=V(G) \setminus A$ is
an independent set in $G$. For $i:=0,\ldots,d$, define $B_i
\subseteq B$ to be set of all vertices in $B$ with degree $i$
in $G$, and let $B_{>d} \subseteq B$ be the vertices in $B$
with degree greater than $d$ in $G$. Then $B:= B_0 \cup \cdots
\cup B_d \cup B_{>d}$, and $|B_0|=0$ since Rule~\ref{Rule:
Isolated} cannot be applied. Due to Rule~\ref{Rule: ConVC}, for
each subset of $i$ vertices $A' \subseteq A$, $1 \leq i \leq
d$, there are at most $i$ vertices $B' \subseteq B_i$ with
$N(B')=A'$. We therefore have $|B_i| \leq i\binom{k}{i}$ for
each $i \in \{1,\ldots,d\}$, and $\sum_{i:=0,\ldots,d} |B_i|
\leq d k^d$. Furthermore, we also have $|B_{>d}| \leq dk$ by
applying Lemma~\ref{Lemma: Easy} to the bipartite graph on $A$
and $B_{>d}$. Accounting also for $A$, we get
$$
|V(G')|=|A|+|B| \leq k + \sum_{i:=0,\ldots,d} |B_i| + |B_{>d}|
\leq k + d k^{d}+ dk=O(k^d),
$$
and the theorem is proved. \qed
\end{proof}

Next we consider \CapVC. In this problem, we are given a graph
$G$, an integer $k$, and a vertex capacity function $cap : V
(G) \to \N$, and the goal is to determine whether there exists
a vertex cover of size $k$ where each vertex covers no more
than its capacity. That is, whether there is a vertex cover $C$
of size $k$ and an injective mapping $\alpha$ mapping each edge
of $E(G)$ to one of its endpoints such that $|\alpha^{-1}(v)|
\leq cap(v)$ for every $v \in V(G)$. We may assume that $k+1 >
d$, since otherwise a kernel is trivially obtained by solving
the problem in polynomial time.

\begin{reduction-rule}
\label{Rule: CapVC}%
If $S \subseteq V(G)$ is a subset of twin vertices with a
common neighborhood $N(S)$ such that $|S| = k + 2 > d \geq
|N(S)|$, remove a vertex with minimum capacity in $S$ from~$G$,
and decrease all the capacities of vertices in $N(S)$ by one.
\end{reduction-rule}

\begin{lemma}
\label{Lemma: CapVC}%
Let $k \geq 1$ be an arbitrary integer, let $G$ be a vertex
capacitated graph, and let $G'$ be a vertex capacitated graph
resulting from applying either Rule~\ref{Rule: Isolated} or
Rule~\ref{Rule: CapVC} to $G$. Then $G$ has a capacitated
vertex cover of size $k$ iff $G'$ has a capacitated vertex
cover of size $k$.
\end{lemma}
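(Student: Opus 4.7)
Rule~\ref{Rule: Isolated} is immediate: an isolated vertex has no incident edges, so removing it neither introduces edges to cover nor interacts with the capacity function, and a capacitated vertex cover of $G$ restricts (after discarding the isolated vertex if present) to one of $G'$ of no larger size, and vice versa. For Rule~\ref{Rule: CapVC} let $v^{*} \in S$ be the removed vertex of minimum capacity. The crucial combinatorial observation, used in both directions, is that any capacitated vertex cover $C$ of size at most $k$ in either $G$ or $G'$ satisfies $N(S) \subseteq C$: since $|S \setminus \{v^{*}\}| = k+1 > k$, at least one vertex of $S \setminus \{v^{*}\}$ lies outside $C$, and every one of its incident edges runs to $N(S)$ and must therefore be covered on the $N(S)$ side.

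For the forward direction, consider a capacitated vertex cover $(C,\alpha)$ of $G$ of size $k$. I first reduce to the case $v^{*} \notin C$. If $v^{*} \in C$, pick any $u \in S \setminus C$ (which exists since $|S \setminus C| \geq 2$), let $T = \{w \in N(S) : \alpha(\{v^{*},w\}) = v^{*}\}$, and define $C'' := (C \setminus \{v^{*}\}) \cup \{u\}$, with $\alpha''(\{v^{*},w\}) := w$ for every $w \in N(S)$, $\alpha''(\{u,w\}) := u$ for $w \in T$, $\alpha''(\{u,w\}) := w$ for $w \in N(S) \setminus T$, and $\alpha'' := \alpha$ on all other edges. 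A short case split on whether $w \in T$ shows that for every $w \in N(S)$ the load $|\alpha''^{-1}(w)|$ equals $|\alpha^{-1}(w)|$, while $|\alpha''^{-1}(u)| = |T| \leq cap(v^{*}) \leq cap(u)$ because $v^{*}$ was chosen of minimum capacity in $S$. Hence $(C'',\alpha'')$ is a valid capacitated vertex cover with $v^{*} \notin C''$. Once $v^{*} \notin C$, the observation above forces $\alpha(\{v^{*},w\}) = w$ for every $w \in N(S)$, so deleting $v^{*}$ removes exactly one edge from each $w \in N(S)$, dropping its load by $1$ and matching the prescribed capacity decrease; restricting $\alpha$ to $E(G')$ thus witnesses a capacitated vertex cover of $G'$ of size $k$.

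For the backward direction, let $(C',\alpha')$ be a capacitated vertex cover of $G'$ of size $k$. By the observation above applied to $G'$, we have $N(S) \subseteq C'$. Set $C := C'$ in $G$ and extend $\alpha'$ to $\alpha$ on $E(G)$ by putting $\alpha(\{v^{*},w\}) := w$ for each $w \in N(S)$. Every $w \in N(S)$ now absorbs exactly one extra edge, raising its load from $|\alpha'^{-1}(w)| \leq cap(w) - 1$ to at most $cap(w)$, and all other capacity constraints are inherited directly from $\alpha'$. Hence $C$ is a capacitated vertex cover of $G$ of size $k$.

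The main obstacle is the reduction to $v^{*} \notin C$ in the forward direction: a naive swap of $v^{*}$ and $u$ could shift up to $|N(S)|$ units of load onto $u$, which need not fit into $cap(u)$. The resolution is to transfer only the edges in $T$ that $v^{*}$ actually covered, exploiting $v^{*}$'s minimum-capacity status to guarantee $|T| \leq cap(v^{*}) \leq cap(u)$, and to simultaneously verify that the loads on each $w \in N(S)$ are left invariant by the exchange between $\{v^{*},w\}$ and $\{u,w\}$.
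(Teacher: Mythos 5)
Your proof is correct and follows essentially the same route as the paper's: establish $N(S)\subseteq C$ from $|S|>k$, swap the minimum-capacity vertex out of the cover for a twin of no smaller capacity, and match the unit load change on each $w\in N(S)$ against the unit capacity decrement. The only difference is that you make explicit the reassignment of $\alpha$ via the set $T$, which the paper leaves implicit in the phrase ``as $cap(u)\leq cap(v)$, this would result in another capacitated vertex cover.''
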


\begin{proof}
The lemma is trivial for Rule~\ref{Rule: Isolated}. For
Rule~\ref{Rule: CapVC}, let $A$ be a capacitated vertex cover
of size $k$ in $G$, and let $u$ be a vertex of minimum capacity
in $S$. As $|S| > k$, there is some $v \in S \setminus A$, and
moreover it must be that $N(S) \subseteq A$. Thus, if $u \notin
A$, then $A$ is also a capacitated vertex cover of $G'$.
Otherwise, if $u \in A$, we can replace $u$ with $v$ in $A$. As
$cap(u) \leq cap(v)$, this would result in another capacitated
vertex cover for $G$ which is also a solution for $G'$.
Conversely, any capacitated vertex cover of size $k$ for $G'$
must also include all vertices of $N(S)$, and thus by
increasing the capacities of all these vertices by one we
obtain a solution of size $k$ for $G$. \qed
\end{proof}

\begin{theorem}
\CapVC\ has a kernel of size $O(k^{d+1})$ in $d$-degenerate
graphs.
\end{theorem}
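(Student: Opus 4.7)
The plan is to mirror the kernelization algorithm for \ConVC\ given in Theorem~\ref{Theorem: ConVC}: apply Rule~\ref{Rule: Isolated} and Rule~\ref{Rule: CapVC} exhaustively to the input graph $G$ to produce a reduced graph $G'$. Correctness will follow directly from Lemma~\ref{Lemma: CapVC}, and since both rules only remove vertices and decrease capacities (without adding any edges), the $d$-degeneracy of $G$ is preserved in $G'$. Each application of Rule~\ref{Rule: CapVC} can be located in $n^{O(d)}$ time, so the entire kernelization procedure runs in polynomial time.

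For the size bound, I would assume that $G'$ admits a capacitated vertex cover $A$ of size at most $k$, so that $B := V(G') \setminus A$ is an independent set. Following the argument used for \ConVC, I would partition $B$ into subsets $B_0, B_1, \ldots, B_d, B_{>d}$ according to the degree of each vertex in $G'$. Rule~\ref{Rule: Isolated} forces $B_0=\emptyset$, and for each $i \in \{1,\ldots,d\}$ and each subset $A' \subseteq A$ of size $i$, exhaustiveness of Rule~\ref{Rule: CapVC} guarantees that at most $k+1$ vertices of $B_i$ share $A'$ as their common neighborhood in $G'$. This yields $|B_i| \leq (k+1)\binom{k}{i}$. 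For the remaining set $B_{>d}$, I would apply Lemma~\ref{Lemma: Easy} to the bipartite $d$-degenerate graph spanned by $A$ and $B_{>d}$, which gives $|B_{>d}| \leq d|A| \leq dk$. Combining these bounds produces
\[
|V(G')| \;\leq\; |A| + \sum_{i=1}^{d} |B_i| + |B_{>d}| \;\leq\; k + \sum_{i=1}^{d}(k+1)\binom{k}{i} + dk \;=\; O(k^{d+1}),
\]
as required.

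The main subtlety compared with the \ConVC\ proof is the shift in the threshold of the reduction rule: whereas Rule~\ref{Rule: ConVC} uses the threshold $d+1$ to obtain the $O(k^d)$ bound for \ConVC, Rule~\ref{Rule: CapVC} uses the threshold $k+2$. This larger threshold is forced by the replacement argument in Lemma~\ref{Lemma: CapVC}, which needs a twin vertex lying outside the cover and hence can only be guaranteed once $|S|>k$. Consequently each neighborhood pattern may be shared by up to $k+1$ vertices of $B$ rather than only $d$, which is exactly the factor that multiplies the final bound by an extra $k$ and produces $O(k^{d+1})$ in place of $O(k^d)$.
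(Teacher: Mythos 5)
Your proposal is correct and follows essentially the same route as the paper, which itself only sketches this proof by pointing to Theorem~\ref{Theorem: ConVC} and noting that the bound on each $B_i$ becomes $(k+1)\binom{k}{i}$ due to the threshold $k+2$ in Rule~\ref{Rule: CapVC}. Your writeup simply makes explicit the details (degeneracy preservation, the role of Lemma~\ref{Lemma: Easy} for $B_{>d}$, and why the larger twin threshold is forced by the replacement argument) that the paper leaves implicit.
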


\begin{proof}[sketch]
The argument is similar to the one used in
Theorem~\ref{Theorem: ConVC}. The only difference is that now
the size of sets $B_i$ is bounded by $(k+1)\binom{k}{i}$
instead of $d\binom{k}{i}$, which yields a kernel size of
$O(k^{d+1})$ instead of $O(k^d)$. \qed
\end{proof}

The following complimenting lower bound follows from a simple
reduction \dSC. In this problem we are given a $d$-regular
hypergraph $(V,\cE)$ and an integer $k$ and the question is
whether there are $E_1,\ldots,E_k \in \cE$ with $V =\cup_i
E_i$. Clearly we can assume that $|V| \leq dk$ since otherwise
there is no solution, and the problem remains hard also when
$|V|=dk$ (and the solution is a partition of $V$). Dell and
Marx show that unless coNP $\subseteq$ NP/poly, \dSC has no
kernel of size $O(k^{d-\varepsilon})$ for any $\varepsilon
>0$~\cite{DellMarx2012}. Note that this lower bound holds even
if the output of the kernel is an instance of another decidable
problem; in particular, even if the output is an instance of
\CapVC.

\begin{theorem}
Let $d \geq 3$. Unless \textnormal{coNP} $\subseteq$
\textnormal{NP/poly}, \CapVC in $d$-degenerate graphs has no
kernel of size $O(k^{d-\varepsilon})$ for any $\varepsilon >
0$.
\end{theorem}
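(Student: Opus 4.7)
The plan is to reduce \dSC on $d$-regular hypergraphs with $|V|=dk$ to \CapVC on $d$-degenerate graphs by a polynomial-time, parameter-linear transformation. Precomposing any hypothetical $O(k^{d-\varepsilon})$-size kernel (or compression, cf.~Remark~\ref{Remark: Compression}) for \CapVC with this reduction would yield a compression of the same asymptotic size for \dSC, contradicting the lower bound of Dell and Marx unless \coNP$\subseteq$\NP/\poly.

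Given an instance $(V,\cE)$, I would construct $G$ as the bipartite incidence graph between $V$ and $\{v_E:E\in\cE\}$ (edge $\{x,v_E\}$ whenever $x\in E$), augmented by a private pendant $p_x$ attached to every element $x\in V$. The capacities would be $cap(p_x)=0$, $cap(x)=|\{E\in\cE:x\in E\}|$ for $x\in V$, and $cap(v_E)=d$, with parameter $k':=(d+1)k=\Theta(k)$. To witness $d$-degeneracy I would use the ordering placing first all pendants, then all set-vertices $v_E$, and finally all elements: each pendant has one right-neighbor ($x$), each $v_E$ has exactly its $d$ incident elements to the right, and the elements themselves have none.

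For correctness, the forward direction is direct: given a partition $\cS=\{E_1,\ldots,E_k\}$ of $V$, I would take $C:=V\cup\{v_{E_i}:i\le k\}$ of size $dk+k=k'$ and assign each pendant edge to its element, each set-edge $\{x,v_E\}$ with $E\in\cS$ to $v_E$, and each remaining set-edge to $x$; the capacities then come out used exactly. The reverse direction is where the pendant gadget does the real work. Since $cap(p_x)=0$, the edge $\{x,p_x\}$ can only be assigned to $x$, forcing $V\subseteq C$ and consuming one unit of each element's capacity. The residual capacity at $x$ is then strictly smaller than its number of incident set-edges, so at least one hyperedge containing $x$ must have its set-vertex also in $C$. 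Consequently $C\setminus V$ is a set cover of $V$ of size at most $k'-|V|=k$; but because $|V|=dk$ and every hyperedge has size $d$, any such cover must in fact be a partition into exactly $k$ hyperedges, yielding the desired \dSC solution.

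The main obstacle I anticipate is reconciling three constraints simultaneously: keeping the degeneracy at exactly $d$ rather than $d+1$, forcing all of $V$ into any feasible cover so that the parameter blowup is only a multiplicative constant, and leaving just enough residual capacity at each element to enforce set-cover structure on the set-vertices chosen in $C$. The zero-capacity pendant is the minimal device I see achieving all three at once, and I expect it to be the main design choice driving the full proof.
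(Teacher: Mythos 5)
Your construction and argument are essentially the same as the paper's: the bipartite incidence graph between $V$ and $\cE$ with a zero-capacity pendant attached to each element, capacities that force all of $V$ into any feasible cover and force the selected set-vertices to cover $V$, the parameter $(d+1)k$, and the combination with the Dell--Marx lower bound for \dSC via Remark~\ref{Remark: Compression}. The only divergence is that you set $cap(v_E)=d$ rather than following the paper's uniform ``degree minus one'' rule (which gives $d-1$ for set-vertices and in fact leaves the forward direction's total capacity one unit short per chosen hyperedge), so your explicit capacities are the correct instantiation of the same idea.
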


\begin{proof}
Given an instance $(V,\cE,k)$ of \dSC with $|V|=kd$, we
construct a graph $H:=(U,F)$ by initially taking the incidence
bipartite graph on $V \cup \cE$, and then connecting each $v
\in V$ to new a leaf-vertex $v'$ which is adjacent only to $v$.
In this way, $U:=V' \cup V \cup \cE$, where $V':=\{v' : v \in
V\}$ is a set of copies of $V$, and $F:=\{\{v,E\}: v \in V, e
\in \cE, \text{ and } v \in e\} \cup \{\{v,v'\} : v \in V\}$.
To complete our construction, we set the capacity of each
vertex $u \in U$ to be its degree in $H$ minus one. Note that
$H$ is $d$-degenerate.

It is not difficult to see that $(V,\cE,k)$ has a solution iff
$H$ has a capacitated vertex cover of size $k+|V|=(d+1)k$.
Indeed if $E_1,\ldots,E_k$ is a solution for $(V,\cE,k)$, then
$\{E_1,\ldots,E_k\} \cup V$ is a capacitated vertex cover of
$H$. Conversely, any minimal capacitated vertex cover of $H$
must include all vertices of $V$ and none of $V'$, and hence if
it is of size $|V|+k$, it must include $k$ vertices which
correspond to $k$ edges of $\cE$ that cover $V$. Thus,
combining the above construction with a $O(k^{d-\varepsilon})$
kernel for \CapVC in $d$-degenerate graphs shows that coNP
$\subseteq$ NP/poly according to~\cite{DellMarx2012}. \qed
\end{proof}

\bibliographystyle{plain}
\bibliography{biblio}

\end{document}